\newcommand{\out}[1]{}
\declaretheorem[style=definition,qed=$\blacktriangle$]{definition}
\theoremstyle{definition}
\newtheorem{lemma}{Lemma}
\newtheorem{theorem}{Theorem}
\newtheorem{corollary}{Corollary}
\newcommand{\decl}{\textit{decl}\xspace}
\newcommand{\stmt}{\textit{stmt}\xspace}
\newcommand{\expr}{\textit{expr}\xspace}
\newcommand{\id}{\textit{id}\xspace}
\renewcommand{\int}{\textit{int}\xspace}
\newcommand{\var}{\textbf{var}\xspace}
\newcommand{\dom}[1]{\ensuremath{\textrm{dom}({#1})}}
\newcommand{\eval}[2]{\ensuremath{\texttt{eval}_{#1}({#2})}}
\newcommand{\FNeval}{\texttt{eval}\xspace}  
\newcommand{\rank}[1]{\ensuremath{\textrm{rank}({#1})}}
\newcommand{\up}[1]{\ensuremath{\lceil{#1}\rceil}}
\newcommand{\sempty}{\texttt{s-empty}\xspace}
\newcommand{\svar}{\texttt{s-var}\xspace}
\newcommand{\okempty}{\texttt{ok-empty}\xspace}
\newcommand{\okstmt}{\texttt{ok-stmt}\xspace}
\newcommand{\okseq}{\texttt{ok-seq}\xspace}
\newcommand{\okprog}{\texttt{ok-prog}\xspace}
\newcommand{\tvar}{\texttt{t-var}\xspace}
\newcommand{\tparen}{\texttt{t-paren}\xspace}
\newcommand{\tprod}{\texttt{t-prod}\xspace}
\newcommand{\ttrans}{\texttt{t-trans}\xspace}
\newcommand{\tcontr}{\texttt{t-contr}\xspace}
\newcommand{\telem}{\texttt{t-elem}\xspace}
\newcommand{\tsmul}{\texttt{t-smul}\xspace}
\newcommand{\tsdiv}{\texttt{t-sdiv}\xspace}
\newcommand{\dempty}{\texttt{d-empty}\xspace}
\newcommand{\dvar}{\texttt{d-var}\xspace}
\newcommand{\evstmt}{\texttt{ev-stmt}\xspace}
\newcommand{\evseq}{\texttt{ev-seq}\xspace}
\newcommand{\evprog}{\texttt{ev-prog}\xspace}
\newcommand{\evprogp}{\texttt{ev-prog'}\xspace}
\newcommand{\dPADempty}{\texttt{d-pad-empty}\xspace}
\newcommand{\dPADvar}{\texttt{d-pad-var}\xspace}
\newcommand{\evPADstmt}{\texttt{ev-pad-stmt}\xspace}
\newcommand{\evPADseq}{\texttt{ev-pad-seq}\xspace}
\newcommand{\evPADprog}{\texttt{ev-pad-prog}\xspace}
\newcommand{\dinputvar}{\texttt{d-in-var}\xspace}
\newcommand{\ok}{\texttt{ok}\xspace}
\newcommand{\e}{\textit{e}\xspace}
\newcommand{\ez}{\textit{e0}\xspace}  
\newcommand{\eo}{\textit{e1}\xspace}  
\newcommand{\x}{\textrm{x}\xspace}
\newcommand{\y}{\textrm{y}\xspace}
\newcommand{\s}{\textrm{s}\xspace}
\renewcommand{\u}{\textrm{u}\xspace}
\renewcommand{\v}{\textrm{v}\xspace}
\newcommand{\A}{\textrm{A}\xspace}
\newcommand{\B}{\textrm{B}\xspace}
\newcommand{\C}{\textrm{C}\xspace}
\renewcommand{\i}{\textbf{i}\xspace}
\renewcommand{\j}{\textbf{j}\xspace}
\renewcommand{\k}{\textbf{k}\xspace}
\renewcommand{\t}{\textbf{t}\xspace}
\newcommand{\M}{\texttt{M}\xspace}
\newcommand{\NN}{\ensuremath{\mathbb{N}}}
\newcommand{\QQ}{\ensuremath{\mathbb{Q}}}
\newcommand{\RR}{\ensuremath{\mathbb{R}}}
\newcommand{\VV}{\ensuremath{\mathbb{V}}}
\begin{document}
\title{%
	Modeling of languages for tensor manipulation%
}%
\author{%
	Norman A.~Rink \\
	norman.rink@tu-dresden.de \\
	Technische Universit\"at Dresden \\
	Germany
}%
\date{January 26, 2018}
\maketitle
\begin{abstract}
Numerical applications and, more recently, machine learning applications rely on high-dimensional data that is typically organized into multi-dimensional tensors.
Many existing frameworks, libraries, and domain-specific languages support the development of efficient code for manipulating tensors and tensor expressions.
However, such frameworks and languages that are used in practice often lack formal specifications.
The present report formally defines a model language for expressing tensor operations.
The model language is simple and yet general enough so that it captures the fundamental tensor operations common to most existing languages and frameworks. 
It is shown that the given formal semantics are sensible, in the sense that well-typed programs in the model language execute correctly, without error.
Moreover, an alternative implementation of the model language is formally defined.
The alternative implementation introduces padding into the storage of tensors, which may benefit performance on modern hardware platforms.
Based on their formal definitions, the original implementation of the model language and the implementation with padding are proven equivalent.
Finally, some possible extensions of the presented model language for tensor manipulation are discussed.
\end{abstract}
\section{Introduction}%
\label{sec:introduction}%
High-dimensional data structures are ubiquitous.
For decades, numerical applications have relied heavily on storing and processing data that is organized in multi-dimensional arrays.
More recently, machine learning applications have also started to make extensive use of large vectors, matrices, and arrays of higher dimensions.
Multi-dimensional arrays are not merely an organization of data in memory; %
they also support a number of natural operations.
Matrix multiplication is perhaps the most widely used such operation, but other operations from linear algebra are important too.
In the context of software engineering, multi-dimensional arrays that are accompanied by natural operations are casually referred to as \emph{tensors}, to emphasize the presence of algebraic structure beyond the mere organization of data into arrays.

Over the years, many libraries and frameworks have appeared that support the development of efficient codes for manipulating tensors, e.g.~\cite{ Baumgartner:2005:TCE,numpy:2017,Bergstra:2010:Theano,Abadi:2016:TensorFlow,itensor:2017,Springer:2017:TTC,Susungi:2017:GPCE,Kjolstad:2017:TACO,Wei:2017:DLVM,Rink:2018:RWDSL}.
Frameworks often define domain-specific languages (DSLs) that facilitate that code can be written at a high level of abstraction while guaranteeing good performance when code is executed.%
\footnote{%
	Sometimes the DSL appears in the guise of a library's API. %
	In this case, the definition of the DSL is implicit in the API specification. %
} %
The formal semantics of these DSLs are often left implicit or are introduced in an ad-hoc fashion.
Since the mathematical operations that underlie tensor manipulation are very well understood, it is perhaps unsurprising that formalizing the semantics is not always considered a high-priority task in the design and implementation of such DSLs.
However, the absence of a formal specification of language semantics makes it difficult to verify correct behavior of implementations.
This is particularly problematic when a language implementation applies many code transformations to optimize code for fast, potentially parallel~\cite{Steuwer:2016:beyond:auto-tuning,Steuwer:2017:Lift} execution.
\out{%
In order to verify that each transformation preserves the language semantics, the semantics have to be defined in the first place.
}

As a step towards overcoming the problems introduced by the absence of formal definitions, the present report specifies formal semantics for tensor manipulation.
To this end, we define a model language that captures essential tensor operations (Section~\ref{sec:background}), %
namely: %
element-wise operations, e.g.~addition or subtraction; %
transposition of dimensions; %
and tensor contraction, which is the natural generalization of matrix multiplication to tensors of more than two dimensions.
By supporting these operations, our model language can express the same tensor manipulations that are commonly supported by libraries and frameworks~\cite{Baumgartner:2005:TCE,numpy:2017,Bergstra:2010:Theano,Abadi:2016:TensorFlow,itensor:2017,Susungi:2017:GPCE,Rink:2018:RWDSL}.
Therefore, our analysis and results should carry over to these frameworks.
Specifically, we formally establish a reference implementation of the model language (Section~\ref{sec:definition}) that we also prove to be safe, in the sense that there are no out-of-bounds accesses to multi-dimensional arrays (Section~\ref{sec:progress:safety}).
We then introduce an alternative implementation that uses padding in the storage layout of tensors (Section~\ref{sec:padding}).
Padding can achieve alignment of tensors in memory and may generally lead to better performance on modern hardware platforms.
Thanks to the formal definition of language semantics, the implementation with padding can be proven equivalent to the reference implementation (Section~\ref{sec:padded:simulation}).
Finally, a number of straightforward extensions of the model language are discussed (Section~\ref{sec:extensions}), and the present report is summarized (Section~\ref{sec:summary}).
\section{Background}%
\label{sec:background}%
Numerical applications rely heavily on linear algebra, %
and hence matrix multiplication is the key operation in many of these applications.
Depending on the specific application, matrix multiplication may appear in different variants and guises.
Nonetheless, the multiplication of matrices $A$ and $B$ is fundamentally defined as
\begin{align}
	(A B)_{ij} = \sum_{k=1}^{d} A_{ik} \cdot B_{kj}.
	\label{eq:matrix:multiplication}
\end{align}
Of course, this only makes sense if the extent of the second dimension of $A$ equals the extent of the first dimension of $B$; %
both are denoted as $d$ in Equation~\eqref{eq:matrix:multiplication}.
For multi-dimensional tensors $u$ and $v$, \emph{tensor contraction} generalizes matrix multiplication, %
and applications that work with high-dimensional data may rely more on tensor contraction than matrix multiplication.
The contraction of the $m$-th dimension of $u$ with the $n$-th dimension of $v$ is defined as
\begin{align}
	\sum_{l=1}^{d} u_{i_1\dots i_{m-1} l\, i_{m+1}\dots i_{k_1}}
		\!\cdot v_{j_1\dots j_{n-1} l\, j_{n+1}\dots j_{k_2}} .
	\label{eq:contraction}
\end{align}
As for matrix multiplication, the contraction only makes sense if the $m$-th dimension of $u$ and the $n$-th dimension of $v$ have the same extent $d$.
This clearly hints at a need for typing of tensors and for type checking of operations~\cite{Eaton:2006:Haskell,Chen:2017:typesafe:abstractions}.

Henceforth, we refer to the number of dimensions of a tensor as the tensor's \emph{rank}.
Thus, in the previous example, $k_1 = \rank{u}$ and $k_2 = \rank{v}$.
It follows that the contraction of $u$ and $v$ has rank $k_1+k_2-2$, regardless of which pair of dimensions is contracted over.
This implies that the type of a tensor must contain more fine-grained information than the tensor's rank.

It is of course also possible to contract over a pair of dimensions of the same tensor.
This kind of contraction naturally occurs in traces of matrix products, which are related to the Euclidean norm~\cite{Froberg:1965} or Frobenius norm~\cite{Strang:2010}:
\begin{align}
	\textit{tr}(A B) =
		\sum_{k_1=1}^{d_1} \left( \sum_{k_2=1}^{d_2} A_{k_1 k_2} B_{k_2 k_1} \right) ,
	\label{eq:trace:of:product}
\end{align}
where parentheses have been added to stress that the result of the inner contraction is a tensor (of rank $2$) in its own right.

While contraction reduces the rank, the \emph{outer product} of tensors $u$ and $v$, denoted $u \otimes v$, constructs a tensor of rank $k_1\! + k_2$:
\begin{align}
	(u \otimes v)_{i_1\dots i_{k_1} j_1\dots j_{k_2}} = 
		u_{i_1\dots i_{k_1}} \!\cdot v_{j_1\dots j_{k_2}} .
	\label{eq:outer:product}
\end{align}
In practical applications, however, rather than forming higher-rank tensors by using the outer product, high-dimensional data is often decomposed into a product of lower-rank tensors~\cite{Stoudenmire:2016}.
The product structure can reduce the complexity of numerical operations, which then benefits performance.
Moreover, this decomposition of data into outer products is often the reason that tensors are introduced into an application in the first place~\cite{Lynch:1964:2,Lynch:1964:1}.
Note that the outer product is frequently also referred to as the \emph{tensor product}.

Another operation that generalizes from matrices to tensors of higher rank is \emph{transposition}.
The tensor $v$ is obtained from $u$ by transposition of the $m$-th and $n$-th dimension if
\begin{align}
	v_{i_1\dots i_m \dots i_n \dots i_k} = u_{i_1\dots i_n \dots i_m \dots i_k} .
	\label{eq:transposition}
\end{align}
Transposition is very important in the implementation of numerical operations since the layout of a data structure in memory can have a big impact on performance~\cite{Springer:2016,Springer:2017:TTC}.

Finally, the usual arithmetic operations generalize to element-wise operations on tensors:
\begin{align}
	(u \odot v)_{i_1\dots i_{k_1}} =
		u_{i_1\dots i_{k_1}} \odot v_{i_1\dots i_{k_2}},
		\quad \odot\in\{+,-,*,/\} .
	\label{eq:element-wise}
\end{align}
For this to make sense, we must have $k_1 = k_2$ and each of the dimensions of $u$ and $v$ must have the same extent.
Once again, this calls for type checking of expressions involving tensors.

Any library, framework, or DSL that aims to support tensor operations will offer an interface or syntax for expressing the operations that have been introduced above.
Beyond merely offering sufficient expressiveness, frameworks and DSLs are typically designed to achieve one or several of the following goals:
\begin{enumerate}[{(1)}]
	\item
	reduction of development time by offering high levels of abstraction, e.g.~\cite{Kirby:2006,Bergstra:2010:Theano,Ragan-Kelley:2013:Halide,Abadi:2016:TensorFlow,Membarth:2016:Hipacc,Perard-Gayot:2017:RaTrace};
	~\\[-6mm]
	\item
	automatic optimization of codes for high performance, e.g.~\cite{Puschel:2004:SPIRAL,Baumgartner:2005:TCE,Ragan-Kelley:2013:Halide,Membarth:2016:Hipacc,Spampinato:2016:LGen,Springer:2017:TTC,Wei:2017:DLVM,Rink:2018:RWDSL};
	~\\[-6mm]
	\item
	ability to re-target codes at different platforms, ideally without losing performance, e.g.~\cite{DeVito:2011:Liszt,Ragan-Kelley:2013:Halide,Steuwer:2015:rewriting,Steuwer:2016:beyond:auto-tuning,Wei:2017:DLVM}.
\end{enumerate}
Achieving these goals typically requires that high-level code undergoes a number of transformations, as is the case in optimizing compilers for general-purpose languages~\cite{Peyton-Jones:1987,Muchnick:2000,Aho:2003:dragonbook}.
To verify formally that transformations do not change the behavior of programs, one first requires a formal definition of program behavior, i.e.~a definition of program execution and its effects.
The present report provides this definition for a model language designed for expressing operations on tensors.
The language, which is introduced and defined formally in the next section, serves as a model for most existing DSLs for tensor manipulation.
\section{The model language}%
\label{sec:definition}%
The previous section has introduced the relevant tensor operations that ought to be supported by any framework or language for tensor manipulation.
Based on this requirement, the present section defines a model language for expressing tensor operations.
In order to focus attention specifically on tensor manipulation, the model language is deliberately kept as minimal as possible.

Another reason for keeping the model language small is to make it easy to see that many existing frameworks and DSLs can be mapped to it.
By establishing a suitable mapping for a given DSL for tensor manipulation, e.g.~\cite{numpy:2017,itensor:2017,Bergstra:2010:Theano,Abadi:2016:TensorFlow,Susungi:2017:GPCE,Wei:2017:DLVM}, one can carry our subsequent analyses over to that DSL.
Thus, one should not think of the model language as a ``silver bullet'' for all problems related to tensors.
It is rather an intermediate language that is intended for the study of formal properties, in particular safety, and correctness of transformations.
\grammarindent0.5in
\begin{figure}
	\begin{grammar}
		<prog> ::= <decl>* <stmt>*
		
		<decl> ::= \var id \textbf{:} \textbf{[}tuple\textbf{]}
		
		<stmt> ::= <id> \textbf{=} <expr> 
		
		<expr> ::= <id> \:|\: \textbf{(} <expr> \textbf{)} \:|\: <expr> ($+$|$-$|$*$|$/$) <expr> \\
		\:|\: <expr> \textbf{\#} <expr> \:|\: <expr> \textbf{.} \textbf{[}<pair>\textbf{]} \:|\: <expr> \textbf{\^{}} \textbf{[}<pair>\textbf{]}
		
		<pair> ::= <int> <int>
		
		<tuple> ::= $\epsilon$ | <int> <tuple>
		
		<int> ::= [0-9][0-9]*
		
		<\id> ::= [a-zA-Z][a-zA-Z0-9]*
	\end{grammar}
	\caption{%
		Syntax of the model language. %
		Non-terminals in angle brackets. %
	}%
	\label{fig:grammar}
\end{figure}%
\subsection{Syntax and examples}
\label{sec:syntax}
A program in the model language consists of a sequence of declarations followed by a sequence of statements, cf.~Figure~\ref{fig:grammar}.
The declaration of a variable starts with the keyword \var, after which the identifier appears, followed by a tuple of integers.
The tuple specifies the extents of the tensor's dimensions and hence the type of the declared variable.
For example, a tensor of rank 3 whose dimensions contain 300, 400, and 500 elements respectively is declared in Figure~\ref{fig:tensor:declaration}.
The degenerate case where a variable is declared to denote a scalar is shown in Figure~\ref{fig:scalar:declaration}.
In this case, the scalar type is indicated by an empty tuple.

\vspace{6mm}
\begin{minipage}{0.4\textwidth}
	\begin{center}
	\var \x\hspace{0.4mm}: \textbf{[}300 400 500\textbf{]}
	\end{center}
	\vspace{-4mm}
	\captionof{figure}{Tensor declaration.}
	\label{fig:tensor:declaration}
\end{minipage}
\hspace{8mm}
\begin{minipage}{0.4\textwidth}
	\begin{center}
	\var \s\hspace{0.4mm}: \textbf{[}\hspace{0.4mm}\textbf{]}
	\end{center}
	\vspace{-4mm}
	\captionof{figure}{Scalar declaration.}
	\label{fig:scalar:declaration}
\end{minipage}
\vspace{6mm}

A statement in the model language assigns a tensor expression to a previously declared variable.
Variables and tensor expressions are written without indices, as in~\cite{numpy:2017,Bergstra:2010:Theano,Abadi:2016:TensorFlow,Rink:2018:RWDSL}.
The dimensions of tensor variables and expressions are implicitly numbered, from left to right, starting from one.

All of the tensor operations introduced in Section~\ref{sec:background} can be expressed in the model language.
The hash operator (\textbf{\#}) denotes the outer product of its operands, and contraction is expressed by a period (\textbf{.})~followed by a pair of integers that specifies the dimensions to be contracted.
The code that expresses the matrix multiplication from Equation~\eqref{eq:matrix:multiplication} is shown in Figure~\ref{fig:contraction}.
First, the outer product of matrices \A and \B is formed with the hash operator, and then the second and third dimension are contracted over. 
In Figure~\ref{fig:contraction}, the resulting matrix product is assigned to the variable \C, which is declared with appropriate dimensions.

\vspace{6mm}
\begin{minipage}[t]{0.38\textwidth}
		\var \A\hspace{0.4mm}: \textbf{[}300 400\textbf{]} \\
		\var \B\hspace{0.4mm}: \textbf{[}400 500\textbf{]} \\
		\var \C\hspace{0.4mm}: \textbf{[}300 500\textbf{]} \\~\\
		\C = (\A \textbf{\#} \B) \textbf{.} \textbf{[}2 3\textbf{]}
	\captionof{figure}{Contraction.}
	\label{fig:contraction}
\end{minipage}
\hspace{8mm}
\begin{minipage}[t]{0.48\textwidth}
		\var \A\hspace{0.4mm}: \textbf{[}300 400\textbf{]} \\
		\var \B\hspace{0.4mm}: \textbf{[}400 300\textbf{]} \\
		\var \s\hspace{1.3mm}: \textbf{[}\hspace{0.4mm}\textbf{]} \\~\\
		\s = ((\A \textbf{\#} \B) 
				\textbf{.} \textbf{[}2 3\textbf{]})
				\textbf{.} \textbf{[}1 2\textbf{]}
	\captionof{figure}{Trace of matrix product.}
	\label{fig:trace:of:product}
\end{minipage}
\vspace{6mm}

Figure~\ref{fig:trace:of:product} demonstrates how multiple contractions can be expressed: %
the trace of the matrix product of \A and \B, cf.~Equation~\eqref{eq:trace:of:product}, is computed.
The result of the first contraction has type \textbf{[}300 300\textbf{]}, and both dimensions are contracted in the second contraction.
Hence, the final result is a scalar and is assigned to the variable \s.
Note that rather than expressing multiple contractions by a sequence of periods as in Figure~\ref{fig:trace:of:product}, one could introduce syntax for writing multiple contractions more compactly.
Such syntax is indeed used in existing frameworks and DSLs for tensors~\cite{numpy:2017,itensor:2017,Baumgartner:2005:TCE,Bergstra:2010:Theano,Abadi:2016:TensorFlow,Susungi:2017:GPCE,Rink:2018:RWDSL}.
Since contractions over multiple pairs of dimensions can always be reduced to a sequence of contractions over single pairs of dimensions, no generality is lost in the model language by restricting the period operator to only one pair of dimensions.

Finally, transposition is expressed by the caret operator (\textbf{\^{}}).
Thus, the code for Equation~\eqref{eq:transposition} is as in Figure~\ref{fig:transposition}.
Again, an arbitrary permutation of a tensor's dimensions can be expressed as a sequence of transpositions.
Hence, the model language's caret operator is completely general, while existing languages will typically include syntax for expressing general permutations more compactly.

\vspace{6mm}
\begin{minipage}[t]{0.4\textwidth}
	\var \u\hspace{0.4mm}: \textbf{[}200 300 400 500 600\textbf{]} \\
	\var \v\hspace{0.4mm}: \textbf{[}200 500 400 300 600\textbf{]} \\~\\
	\v = u \textbf{\^{}} \textbf{[}2 4\textbf{]}
	\captionof{figure}{Transposition.}
	\label{fig:transposition}
\end{minipage}
\vspace{6mm}

The production for \expr in the grammar in Figure~\ref{fig:grammar} does not specify any precedence order for the operations between expressions.
For the purpose of this report, no precedence order is required, and it may be assumed that all operations between expressions associate to the left.
Hence, parentheses must be used to indicate the order in which operations are to be performed.%
\footnote{%
	An alternative convention could be to assign higher precedence to those operators that appear further towards the right in the production for \expr.
	This convention is in agreement with the usual precedence order of the arithmetic operators $+$, $-$, $*$, and $/$.
}
\subsection{Typing and evaluation}
\label{sec:semantics}
In this section we formally define the semantics of the model language.
We also establish some simple properties of model language programs.
Before we can do this, we need to set up a number of preliminary definitions.
\begin{definition}[Multi-index, partial order, projection]
\label{def:multi:index}
Let $k\in\NN_0$.
\begin{enumerate}[{(i)}]
	\item
	A \emph{multi-index} $\i$ is a tuple $\i\in\NN^k$. %
	By convention, $\NN^0 = \{()\}$, i.e.~the set containing only the empty tuple.
	Multi-indices are written in bold face.
	\item
	\label{def:multi:index:order}
	Let $\i = (i_1,\dots,i_k) \in\NN^k$ and $\j = (j_1,\dots,j_k) \in\NN^k$.
	The partial order $\le_k$ is defined by
	\begin{align*}
		\i \le_k \j \:\Leftrightarrow\: \forall_{1\le l\le k}\, i_l \le j_l .
	\end{align*}
	The subscript $k$ is usually omitted, i.e.~instead of $\le_k$ we simply write $\le$.
	\item
	\label{def:multi:index:interval}
	Let $\i, \j, \k \in\NN^k$.
	A pair of multi-indices is denoted as $(\i,\j)$.
	Mildly abusing notation, we write $\k \in (\i,\j)$ iff $\k \nleq \i$ and $\k \le \j$. 
	\item
	The elements of a tensor $x$ can be accessed using multi-indices,
	\begin{align*}
		x_{\i} = x_{i_1\cdots i_k},
	\end{align*}
	where $\i = (i_1,\dots,i_k) \in\NN^k$.
	\item
	\label{def:multi:index:projection}
	For all $l\in\NN$ such that $1\le l\le k$, %
	the $l$-th \emph{projection} $\pi_l$ is defined as the map %
	$\pi_l : \NN^k \to \NN$, $\pi_l(\i) = i_l$, for $\i = (i_1,\dots,i_k) \in \NN^k$.
	\end{enumerate}%
	~\\[-11mm]
\end{definition}
Multi-indices are also used to specify the type of a tensor.
However, our presentation will be clearer if we can distinguish tensor types from general multi-indices.
Hence, the following definition of tensor types.
\begin{definition}[Tensor type, rank]
\label{def:tensor:type}
A \emph{tensor type} $\t$ (or simply \emph{type}) is a multi-index $\t \in\NN^k$, $k\in\NN_0$.
For $\t \in\NN^k$, $k$ is the \emph{rank} of the tensor type $\t$.
\end{definition}
Note that the rank of a tensor may be zero (since $k\in\NN_0$), which accounts for the degenerate case when a tensor is in fact a scalar, as in Figure~\ref{fig:scalar:declaration}.
The extent of any tensor dimension, however, must be positive since the components of a multi-index are taken from $\NN= \{1,2,\dots\}$.

The essential ingredients in typing and evaluating model language programs are two distinct maps.
The first map, denoted $\Gamma$, maps identifiers to tensor types.
This map is used during the typing of model language programs to access the types of declared variables.
Since typing is a static program analysis, we refer to the map $\Gamma$ as the \emph{static context}.

The second map, denoted $\mu$, maps subscripted identifiers to an appropriately chosen value domain, typically $\RR$, $\QQ$, or a machine-representable subset thereof.
Thus, the map $\mu$ stores the values corresponding to elements of tensors.
During program evaluation, the map $\mu$ is typically updated.
Hence, $\mu$ is referred to as the \emph{dynamic store}.
When a model language program has terminated, the state of the store $\mu$ is considered the program's result since $\mu$ then contains the values computed for all the tensors involved in the program.

The next definition summarizes the previous paragraphs.
\begin{definition}[Context, store]
\label{def:context:store}
~\\[-5mm]
\begin{enumerate}[{(i)}]
	\item
	\label{def:context:store:context}
	A \emph{static context} (or simply \emph{context}) is a map $\Gamma$ from identifiers to tensor types of arbitrary rank.
	Formally, for an identifier \x, if $\x\in\dom{\Gamma}$, then
	\begin{align*}
		\Gamma(\x)\in\bigcup_{k=0}^{\infty}\NN^k.
	\end{align*}
	\item
	\label{def:context:store:undefined}
	The symbol $\bullet$ denotes the special value \emph{undefined}.
	\item 
	\label{def:context:store:store}
	A \emph{dynamic store} (or simply \emph{store}) is a map $\mu$ from subscripted identifiers to values, including the undefined value $\bullet$.
	Formally, for an identifier \x and multi-index $\i\in\NN^k$, %
	if $\x_{\i}\in\dom{\mu}$, then
	\begin{align*}
		\mu(\x_{\i}) \in \VV \cup \{\bullet\},
	\end{align*}
	where $\VV$ is the value domain of interest.
\end{enumerate}%
~\\[-11mm]
\end{definition}
We admit $\bullet$ in the image of the store $\mu$ for two reasons.
First, this allows us to reason about variables that are in the domain of $\mu$ but have not been initialized.
Second, by allowing undefined values, we do not need to take extra care in handling division by zero.
Extensions of the usual arithmetic operators to $\VV \cup \{\bullet\}$ are defined below.
\newpage
\begin{definition}[Extended arithmetic]
\label{def:extended:arithmetic}
Let $v\in\VV$ and $w\in\VV\setminus\{0\}$.
Then, %
\begin{enumerate}[{(i)}]
	\item
	$v \odot \bullet = \bullet \odot v = \bullet$, for $\odot\in\{+,-,*\}$,
	\item
	$v / \bullet = \bullet / w = \bullet$, and %
	$v / 0 = \bullet / 0 = \bullet$.
\end{enumerate}
~\\[-11mm]
\end{definition}
Since we allow maps to take the undefined value, we must distinguish carefully between the constant map that takes the value $\bullet$ everywhere and the map $\bot$ that is not defined anywhere.
Hence, we define the map $\bot$ by virtue of its empty domain.
\begin{definition}[Map with empty domain]
\label{def:bottom}
The symbol $\bot$ denotes the map that has an empty domain, i.e.~$\dom{\bot}=\emptyset$.
\end{definition}
\begin{figure*}[h!]
	\begin{minipage}{\textwidth}
		\begin{gather*}
		\infer[\sempty]
		{\emptyset \vdash_{s} \bot}
		{} \\[2mm]
		\infer[\svar]
		{\decl^*\:\:
			\textbf{var}\: \x\: \textbf{:}\: \textbf{[}d_1, \dots, d_k\textbf{]}
			\vdash_{s} \Gamma\{\x \mapsto (d_1,\dots,d_k)\}}
		{\decl^* \vdash_{s} \Gamma
			\quad \x\notin \dom{\Gamma}}
		\end{gather*}
		\vspace{-6mm}
		\captionof{figure}{Static context formation $\vdash_{s}$.}
		\label{fig:typing:context}
	\end{minipage}%
	\\[5mm]
	\begin{minipage}{\textwidth}
		\begin{gather*}
		\infer[\okempty]
		{\Gamma \vdash \emptyset\:\:\texttt{ok}}
		{} \\[2mm]
		\infer[\okstmt]
		{\Gamma \vdash \x\: \textbf{=}\: \e \:\: \texttt{ok}}
		{\x\in\dom{\Gamma} \quad
			\t = \Gamma(\x) \quad
			\Gamma \vdash \e : \t} \\[2mm]
		\infer[\okseq]
		{\Gamma \vdash \stmt^*\:\stmt\:\:\texttt{ok}}
		{\Gamma \vdash \stmt^*\:\:\texttt{ok} \quad
			\Gamma \vdash \stmt\:\:\texttt{ok}} \\[2mm]
		\infer[\okprog]
		{\decl^*\:\stmt^*\:\:\texttt{ok}}
		{\decl^* \vdash_{s} \Gamma \quad
			\Gamma \vdash \stmt^*\:\:\texttt{ok}}
		\end{gather*}
		\vspace{-6mm}
		\captionof{figure}{Statement and program well-formedness.}
		\label{fig:typing:statement}
	\end{minipage}%
	\\[5mm]
	\begin{minipage}{\textwidth}
		\begin{gather*}
		\infer[\tvar]
		{\Gamma \vdash \x : \t}
		{\x\in\dom{\Gamma} \quad
			\t = \Gamma(\x)} \\[2mm]
		\infer[\tparen]
		{\Gamma \vdash \textbf{(} \e\,\textbf{)} : \t}
		{\Gamma \vdash \e : \t} \\[2mm]
		\infer[\tprod]
		{\Gamma \vdash \ez\,\textbf{\#} \eo
			: (d_{01},\dots,d_{0k},d_{11},\dots,d_{1l}) }
		{\Gamma \vdash \ez : (d_{01},\dots,d_{0k}) \quad
			\Gamma \vdash \eo : (d_{11},\dots,d_{1l})} \\[2mm]
		\infer[\ttrans]
		{\Gamma \vdash \e\,\textbf{\^{}}\,\textbf{[}m\, n\textbf{]} : (d_1, \dots, d_n, \dots, d_m, \dots, d_k)}
		{\Gamma \vdash \e : (d_1, \dots, d_m, \dots, d_n, \dots, d_k)} \\[2mm]
		\infer[\tcontr]
		{\Gamma \vdash \e\,\textbf{.}\,\textbf{[}m\, n\textbf{]} : (d_1, \dots, \widehat{d_m}, \dots, \widehat{d_n}, \dots, d_k)}
		{\Gamma \vdash \e : (d_1, \dots, d_m, \dots, d_n, \dots, d_k) \quad
			d_m = d_n} \\[2mm]
		\infer[\telem]
		{\Gamma \vdash \ez \odot\! \eo : \t}
		{\Gamma \vdash \ez : \t \quad
			\Gamma \vdash \eo : \t \quad
			\odot\in\{+, -, *, /\}} \\[2mm]
		\infer[\tsmul]
		{\Gamma \vdash \ez *\! \eo : \t_1}
		{\Gamma \vdash \ez : () \quad
			\Gamma \vdash \eo : \t_1} \\[2mm]
		\infer[\tsdiv]
		{\Gamma \vdash \ez\,/ \eo : \t_0}
		{\Gamma \vdash \ez : \t_0 \quad
			\Gamma \vdash \eo : ()}
		\end{gather*}
		\vspace{-6mm}
		\captionof{figure}{Expression typing.}
		\label{fig:typing:expression}
	\end{minipage}%
\end{figure*}%
\subsubsection{Typing and well-formed programs}
\label{sec:typing}
\out{%
	Now we are finally in a position to discuss the rules for forming valid model language programs.
}
The inference rules in Figures~\ref{fig:typing:context}--\ref{fig:typing:expression} specify what constitutes a correct program in the model language.
The declarations at the beginning of a program are used to form a static context $\Gamma$, cf.~Figure~\ref{fig:typing:context}.
Initially, the context is empty, which is expressed by the appearance of $\bot$ in the rule \sempty.
Subsequently, rule \svar states that every declaration of a variable adds the variable's identifier to the context $\Gamma$, such that $\Gamma$ maps the identifier to its tensor type.
The notation $\Gamma\{\x\mapsto \dots\}$ means that the domain of the map $\Gamma$ is augmented by $\x$.
Note that because of $\x\notin\dom{\Gamma}$ in the premise of rule \svar, variables may not be re-declared in the model language.

Figure~\ref{fig:typing:statement} lists the inference rules for judging whether a program is well-formed.
Rule \okprog says that in order to decide whether a program is well-formed, the static context $\Gamma$ must first be formed from the declarations that appear in the program.
This context is then used to decide whether the statements in the program are well-formed.
By the rule \okempty, an empty sequence of statements is well-formed.
By \okstmt, an assignment statement is well-formed if the type of the variable and the type of the expression on the right-hand side agree.
Note that $\x\in\dom{\Gamma}$ in the premise of rule \okstmt ensures that variables have been declared when expressions are assigned to them.

The rule \okstmt relies on the typing of an expression $\e$ in the context $\Gamma$.
Expression typing is specified in Figure~\ref{fig:typing:expression}.
The typing rules \tvar and \tparen are straightforward.
The \tprod rule says that the type of an outer product is formed by concatenating the types of the operands.
A transposition swaps two dimensions in a tensor type, cf.~rule \ttrans.
For a tensor contraction to be well-typed, the contracted dimensions must have the same extent, which is ensured by the premise $d_m=d_n$ of rule \tcontr.
The contracted dimensions are then dropped from the resulting type, %
which is indicated by the hats in $(d_1, \dots, \widehat{d_m}, \dots, \widehat{d_n}, \dots, d_k)$.

Rule \telem allows element-wise application of the usual arithmetic operators ($+$, $-$, $*$, and $/$) if the types of the operands match.
Rule \tsmul further allows multiplication by a scalar from the left, and rule \tsdiv analogously allows division by a scalar.
\out{%
	Neither of these scalar operations change the type of their tensor operands.
}

Note that the formation of programs can fail in essentially five ways:
\begin{enumerate}[{(1)}]
	\item
	re-declaration of a variable, violating the premises of rule \svar;
	\item
	assignment to an undeclared variable, violating the premises of \okstmt;
	\item
	mismatch of types in an assignment, violating the premises of rule \okstmt;
	\item
	use of an undeclared variable in an expression, violating the premises of \tvar;
	\item
	mismatch of types in an expression, violating the premises of rules \tcontr, \telem, \tsmul, or \tsdiv.
\end{enumerate}
\begin{figure*}[h!]
	\begin{minipage}{\textwidth}
		\begin{gather*}
		\infer[\dempty]
		{\emptyset \vdash_{d} \bot}
		{} \\[2mm]
		\infer[\dvar]
		{\decl^*\:\:
			\textbf{var}\: \x\: \textbf{:}\: \textbf{[}d_1, \dots, d_k\textbf{]}
			\vdash_{d} \mu\{\forall_{\i\le\t}\:\x_{\i} \mapsto x_{\i}\}}
		{\decl^* \vdash_{d} \mu \quad
			\t = (d_1, \dots, d_k) \quad
			\forall_{\i\le\t}\: x_{\i}\in\VV\cup\{\bullet\}}
		\end{gather*}
		\vspace{-6mm}
		\captionof{figure}{Dynamic store formation $\vdash_{d}$.}
		\label{fig:dynamic:store}
	\end{minipage}%
	\\[7mm]
	\begin{minipage}{\textwidth}
		\begin{gather*}
		\infer[\evstmt]
		{\langle\mu, \x\: \textbf{=}\: \e\rangle
			\rightarrow_{\Gamma}
			\langle \mu\{\forall_{\i\le\t}\:
						\x_{\i} \mapsto r_{\i}\},
					\emptyset\rangle}
		{\begin{array}{c}
			\x\in\dom{\Gamma} \quad
			\t = \Gamma(\x) \\[1mm]
			\forall_{\i\le\t} \:
			\x_{\i}\!\in\!\dom{\mu} \:\wedge\:
			r_{\i}\!=\!\eval{\Gamma\!,\mu}{\e_{\i}}
			\end{array}
		} \\[3mm]
		\infer[\evseq]
		{\langle\mu, \stmt^*\:\stmt\rangle \rightarrow_{\Gamma} \langle\mu'', \emptyset\rangle}
		{\langle\mu, \stmt^*\rangle \rightarrow_{\Gamma} \langle\mu', \emptyset\rangle \quad
			\langle\mu', \stmt\rangle \rightarrow_{\Gamma} \langle\mu'', \emptyset\rangle} \\[2mm]
		\infer[\evprog]
		{\decl^*\:\stmt^* \Downarrow \mu'}
		{\decl^* \vdash_{s} \Gamma \quad
			\decl^* \vdash_{d} \mu \quad
			\langle\mu, \stmt^*\rangle {\rightarrow}_{\Gamma}^{0,1}
			\langle\mu', \emptyset\rangle}
		\end{gather*}
		\vspace{-6mm}
		\captionof{figure}{Statement and program evaluation.}
		\label{fig:evaluation:statement}
	\end{minipage}%
	\\[7mm]
	\begin{minipage}{\textwidth}
		\begin{align}
		&\eval{\Gamma\!,\mu}{\x_{i_1\dots i_k}} =
		\mu(\x_{i_1 \dots i_k})
		\label{eq:eval:var} \\[2mm]
		&\eval{\Gamma\!,\mu}{\textbf{(}\e\,\textbf{)}_{i_1\dots i_k}} = 
		\eval{\Gamma\!,\mu}{\e_{i_1\dots i_k}}
		\label{eq:eval:paren} \\[2mm]
		&\eval{\Gamma\!,\mu}{
			\ez\,\textbf{\#}
			\eo\,_{i_{01}\dots i_{0k} i_{11} \dots i_{il}}
		} = 
		\eval{\Gamma\!,\mu}{\ez_{i_{01} \dots i_{0k}}} \cdot \eval{\Gamma\!,\mu}{\eo_{i_{11} \dots i_{il}}}
		\label{eq:eval:prod} \\[2mm]
		&\eval{\Gamma\!,\mu}{
			\e\,\textbf{\^{}}\,\textbf{[}m\, n\textbf{]}_{i_1 \dots i_m \dots i_n \dots i_k}
		} =
		\eval{\Gamma\!,\mu}{\e_{i_1 \dots i_n \dots i_m \dots i_k}}
		\label{eq:eval:trans} \\[2mm]
		&\eval{\Gamma\!,\mu}{
			\e\,\textbf{.}\,\textbf{[}m\, n\textbf{]}
			_{i_1\dots\widehat{i_m}\dots \widehat{i_n}\dots i_k}
		} =
		\sum_{l=1}^{\pi_{m}(\t)}
		\eval{\Gamma\!,\mu}{
			\e
			_{i_1\dots i_{m-1} l\,i_{m+1}\dots i_{n-1} l\,i_{n+1}\dots i_k}
		} \,, \nonumber \\[-3mm]
		&\hspace{7.2cm} \textrm{where}\:\: \Gamma \vdash \e : \t
		\label{eq:eval:contr} \\[4mm]
		&\eval{\Gamma\!,\mu}{
			\ez\, \odot \eo\,_{i_1 \dots i_k}
		} =
		\left\{
		\begin{array}{l}
		\eval{\Gamma\!,\mu}{\ez\,} \cdot \eval{\Gamma\!,\mu}{\eo_{i_1\dots i_k}} \,, \\[1mm]
		\hspace{3cm} \textrm{if}\:\: \Gamma \vdash \ez : () \,\:\textrm{and}\:\: \odot\equiv* \\[2mm]
		\eval{\Gamma\!,\mu}{\ez_{i_1\dots i_k}} /  \eval{\Gamma\!,\mu}{\eo\,} \,, \\[1mm]
		\hspace{3cm} \textrm{if}\:\: \Gamma \vdash \eo : () \,\:\textrm{and}\:\: \odot\equiv / \\[2mm]			\eval{\Gamma\!,\mu}{\ez_{i_1\dots i_k}}\odot \eval{\Gamma\!,\mu}{\eo_{i_1\dots i_k}} \,, \\[1mm]
		\hspace{3cm} \textrm{otherwise}
		\end{array}
		\right.
		\label{eq:eval:elem}
		\end{align}
		\captionof{figure}{Expression evaluation.}
		\label{fig:evaluation:expression}
	\end{minipage}
\end{figure*}%
\subsubsection{Evaluation of model language programs}
\label{sec:evaluation}
The model language is intended to capture only part of an ambient numerical application, namely the part that deals with manipulating tensors.
Hence, when a model language program is started, it is assumed that memory has been allocated for variables by the ambient application.
The initial store is thus populated with values for all variables declared in the model language program.
Values may be undefined ($\bullet$), which models the situation where the ambient application has allocated memory but has not initialized it.
Uninitialized variables may of course be assigned to during the execution of a model language program.
As already explained, the state of the store $\mu$ after executing a model language program is considered the result of the program.%
\footnote{%
	The interface between the ambient application and a program written in the model language is further discussed in Section \ref{sec:limitations}.
}

The rules in Figure~\ref{fig:dynamic:store} specify the state of the store when execution of a model language program begins.
Rule \dvar captures the requirement that memory has been allocated by the ambient application.
The values $x_{\i}$ (written in italics) are assumed to have been fixed by the ambient application, including the possibility of uninitialized values, which is why the $x_{\i}$ are chosen from $\VV\cup\{\bullet\}$.
Note that since re-declaration of variables is not allowed, each application of \dvar \emph{augments} the domain of $\mu$ with $\x_{\i}$ for all multi-indices $\i\le\t$.

The evaluation of statements and programs is specified by the inference rules in Figure~\ref{fig:evaluation:statement}.
Rule \evprog formalizes the fact that the final state of the dynamic store is considered the result of program evaluation: %
the relation $\Downarrow$ relates a program, which is a purely syntactic entity, to a dynamic store.
The way in which assignment statements manipulate the store is defined by the relation $\rightarrow_{\Gamma}$.
It should be stressed that rule \evstmt \emph{updates} the store $\mu$ with new values for all $\x_{\i}$ with $\i\le\t$.

It may seem surprising that the evaluation relation $\rightarrow_{\Gamma}$, which governs the dynamic behavior of programs, depends explicitly on the static context $\Gamma$.
This can be understood by considering how one would implement the rule \evstmt.
Due to the universal quantifier in the conclusion of the rule, a new value is assigned to each element of the tensor $\x$.
In practice, this would be done inside a loop nest that iterates over all values that can be taken by the indices of $\x$, i.e.~a loop nest that iterates over all multi-indices $\i\le\t$.
Since the type $\t$ is statically known, the machine code produced by a model language compiler would contain constant loop bounds corresponding to the entries in the tuple $\t$, %
and the loop bounds, of course, determine the dynamic behavior of program execution.
Note further that the evaluation of expressions in the premise of rule \evstmt also depends on the static context, as is explained below.

The notation $\rightarrow_{\Gamma}^{0,1}$ in the premise of rule \evprog means that the relation $\rightarrow_{\Gamma}$ is applied zero or one times.
Hence, a program that does not contain any statements evaluates to the initial store, %
i.e.~$\langle\mu, \emptyset\rangle = \langle\mu, \emptyset\rangle$ implies $\decl^* \Downarrow \mu$.

Figure~\ref{fig:evaluation:expression} gives denotational semantics for the evaluation of expressions.
The evaluation function \FNeval depends on both the dynamic store $\mu$ and the static context $\Gamma$.
Equation~\eqref{eq:eval:var} explains why the store is required for evaluation: %
the values bound to variables must be looked up in the store.
The static context $\Gamma$ appears in Equations~\eqref{eq:eval:contr} and~\eqref{eq:eval:elem}, both of which require type information for correct evaluation of expressions.
Once again, a model language compiler would encode this type information in the machine code it produces:
in implementing Equation~\eqref{eq:eval:contr}, $\pi_{m}(\t)$ would become a loop bound;
and the actual value of the operator variable $\odot$ in Equation~\eqref{eq:eval:elem} would be turned into an appropriate machine instruction.
In Equation~\eqref{eq:eval:elem}, and henceforth, we use $\equiv$ to denote an identity between syntactic entities.

Program evaluation can get stuck if identifiers are not in the domain of the store $\mu$ %
or if the function \FNeval is not defined for a given argument.
The following sections show, however, that this cannot happen in well-formed programs.
(``Well-type[d] programs cannot go wrong.''~\cite{Milner:1978})
\subsubsection{Properties of well-formed programs}
\label{sec:properties}
In this section, we establish simple properties of well-formed programs, culminating in Corollary~\ref{cor:eval:well-defined}, which states that the evaluation function \FNeval is well-defined for expressions that appear in well-formed programs.
This already goes a long way towards establishing progress and safety of well-formed programs, which will be done in Section \ref{sec:progress:safety}.

The rules for forming the context $\Gamma$ (in Figure~\ref{fig:typing:context}) and the rules for forming the store $\mu$ (in Figure~\ref{fig:dynamic:store}) have very similar structure.
Therefore, a given context $\Gamma$ uniquely determines the corresponding store $\mu$.
This is formally established by the following Lemma~\ref{lem:dynamic:store}, which assumes that the values $x_{\i}$ in the inference rule \dvar are uniquely determined by the ambient application (cf.~the discussion at the beginning of Section~\ref{sec:evaluation}).
\begin{lemma}[Unique dynamic store]
\label{lem:dynamic:store}
Let $\decl^* \vdash_{s} \Gamma$. %
There exists a unique store $\mu$
such that
\begin{enumerate}[{(i)}]
	\item
	\label{lem:dynamic:store:i}
	$\decl^* \vdash_{d} \mu$, and
	\item 
	\label{lem:dynamic:store:ii}
	for all $\x \in\dom{\Gamma}$, 
	with $\t = \Gamma(\x)$,
	the following holds:
	$\forall_{\i\le\t}\:\x_{\i}\in\dom{\mu}$.
\end{enumerate}
\begin{proof}
By induction on the length of $\decl^*$, which we denote as $N$. \\
~\\[-3mm]
$N=0$: %
The rules \sempty and \dempty imply that $\Gamma=\bot$ and $\mu=\bot$ respectively.
Therefore, $\dom{\Gamma} = \dom{\mu} = \emptyset$ %
and part~(\ref{lem:dynamic:store:ii}) holds trivially. \\
~\\[-3mm]
$N+1$: Let %
\begin{align*}
	\decl^{N+1} \equiv
	\decl^N \:\: \var\,\y :
		\textbf{[}d_1,\dots,d_k\textbf{]}
\end{align*}
such that $\decl^N \vdash_{s} \Gamma_{\!N}$, %
and let $\mu_N$ be the corresponding store that exists by the induction hypothesis, i.e.~$\decl^N \vdash_{d} \mu_N$.
Then, letting $\t = (d_1,\dots,d_k)$, $\Gamma = \Gamma_{\!N}\{\y \mapsto \t\}$.
By the \dvar rule, %
\begin{align*}
	\decl^{N+1} \vdash_{d} \mu, \quad
	\textrm{where}\:\: \mu = \mu_N\{\forall_{\i\le\t}\, \y_{\i} \mapsto y_{\i}\}.
\end{align*}
Clearly, $\dom{\Gamma} = \dom{\Gamma_{\!N}} \cup \{\y\}$.
For $\x\in\dom{\Gamma_{\!N}}$, $\x_{\i}\in\dom{\mu}$ holds by the induction hypothesis since $\dom{\mu_N}\subseteq\dom{\mu}$.
For $\y$, part~(\ref{lem:dynamic:store:ii}) of the lemma holds by construction of $\mu$.
\end{proof}
\end{lemma}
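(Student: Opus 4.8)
The plan is to induct on the length $N$ of the declaration sequence $\decl^*$, exploiting the fact that the context-formation rules of Figure~\ref{fig:typing:context} and the store-formation rules of Figure~\ref{fig:dynamic:store} are both syntax-directed by $\decl^*$ and have matching shapes, so that each step extending $\Gamma$ has a unique counterpart that extends $\mu$ over the same set of subscripted identifiers. In the base case $N=0$ the sequence is empty; the only rule concluding $\emptyset \vdash_s \Gamma$ is $\sempty$, forcing $\Gamma = \bot$, and the only rule concluding $\emptyset \vdash_d \mu$ is $\dempty$, so $\mu = \bot$ is the unique such store. Since $\dom{\bot} = \emptyset = \dom{\Gamma}$, part~(\ref{lem:dynamic:store:ii}) holds vacuously.

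For the inductive step, write $\decl^{N+1} \equiv \decl^N\:\: \var\,\y : \textbf{[}d_1,\dots,d_k\textbf{]}$ and put $\t = (d_1,\dots,d_k)$. Inverting the derivation of $\decl^{N+1} \vdash_s \Gamma$ --- whose last rule must be $\svar$ --- gives $\decl^N \vdash_s \Gamma_N$ for some $\Gamma_N$ with $\y \notin \dom{\Gamma_N}$ and $\Gamma = \Gamma_N\{\y \mapsto \t\}$. Applying the induction hypothesis to $\decl^N$ and $\Gamma_N$ yields the unique store $\mu_N$ with $\decl^N \vdash_d \mu_N$ for which part~(\ref{lem:dynamic:store:ii}) holds with respect to $\Gamma_N$. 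Rule $\dvar$ now produces $\mu = \mu_N\{\forall_{\i\le\t}\:\y_\i \mapsto y_\i\}$ with $\decl^{N+1} \vdash_d \mu$, the $y_\i$ being the values fixed by the ambient application. For uniqueness, any $\mu'$ with $\decl^{N+1} \vdash_d \mu'$ must also be concluded by $\dvar$, whose premise forces some $\mu''$ with $\decl^N \vdash_d \mu''$; the induction hypothesis gives $\mu'' = \mu_N$, and since the $y_\i$ are uniquely determined, $\mu' = \mu$. Finally, for part~(\ref{lem:dynamic:store:ii}): if $\x \in \dom{\Gamma_N}$ then $\x_\i \in \dom{\mu_N} \subseteq \dom{\mu}$ by the induction hypothesis, and for $\y$ it holds by construction of $\mu$.

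The argument is mostly bookkeeping, so the one point I would take care to state explicitly is where uniqueness of $\mu$ actually comes from. Rule $\dvar$ by itself only pins down the \emph{domain} of $\mu$, namely the set $\{\y_\i : \i\le\t\}$ contributed by each declaration, which coincides exactly with the set of identifiers forced into $\dom{\Gamma}$ by the corresponding application of $\svar$; uniqueness of the stored values $\mu(\y_\i)$ rests on the standing assumption from the beginning of Section~\ref{sec:evaluation} that the ambient application fixes the $y_\i$, and I would invoke that assumption by name. The only other thing to keep straight is the inversion step that extracts $\decl^N \vdash_s \Gamma_N$ from the premise of $\svar$, which is what licenses applying the induction hypothesis to the shorter sequence.
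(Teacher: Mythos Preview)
Your proof is correct and follows essentially the same route as the paper: induction on the length of $\decl^*$, with the base case handled by \sempty/\dempty and the inductive step by inverting \svar and applying \dvar to the store obtained from the induction hypothesis. If anything, you are more careful than the paper, which constructs $\mu$ and verifies part~(\ref{lem:dynamic:store:ii}) but leaves the uniqueness argument (inverting \dvar and invoking the assumption that the $y_{\i}$ are fixed by the ambient application) implicit; your explicit treatment of that point is a genuine improvement.
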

\begin{definition}
\label{def:unique:store}
If $\decl^* \vdash_{s} \Gamma$, then the corresponding dynamic store that exists by Lemma \ref{lem:dynamic:store} is denoted as $\mu(\Gamma)$.
\end{definition}
In the light of Lemma \ref{lem:dynamic:store}, the evaluation rule \evprog can be simplified.
Instead of relying on the relation $\vdash_{d}$ to form the initial store, one can directly use the store $\mu(\Gamma)$:
\begin{gather*}
\infer[\evprogp]
	{\decl^*\:\stmt^* \Downarrow \mu'}
	{\decl^* \vdash_{s} \Gamma \quad
	 \mu = \mu(\Gamma) \quad
	 \langle\mu, \stmt^*\rangle {\rightarrow}_{\Gamma}^{0,1}
	 \langle\mu', \emptyset\rangle}
\end{gather*}%
\out{
Second, the condition $\x_{\i}\in\dom{\mu}$ in the rule \evstmt can be phrased in terms of the domain of $\mu(\Gamma)$,
\begin{gather*}
\infer[\evstmtp]
	{\langle\mu', \x\: \textbf{=}\: \e\rangle
	 \rightarrow_{\Gamma}
	 \langle\mu'\{\forall_{\i\le\t}\:\x_{i_1\dots i_k}
	 \mapsto
	 r_{i_1\dots i_k}\}, \emptyset\rangle}
	{\begin{array}{c}
		\x\in\dom{\Gamma} \quad
	 	\t = \Gamma(\x) \\[1mm]
	 	\dom{\mu(\Gamma)}\subseteq\dom{\mu'} \quad
		 \forall_{\i\le\t} \:
		 r_{\i} \!=\!
	 	\eval{\Gamma\!,\mu(\Gamma)}{\e_{\i}}
	 \end{array}
 	}
 	\: .
\end{gather*}%
}

We could now proceed to proving that $\eval{\Gamma\!,\mu(\Gamma)}{\cdot}$ is well-defined.
However, the proof does not rely on the details of the store $\mu(\Gamma)$, and only the domain of the store matters.
Therefore, the following slightly more general lemma holds.
\begin{lemma}
\label{lem:eval:well-defined}
Assume $\decl^* \vdash_{s} \Gamma$.
The following hold:
\begin{enumerate}[{(i)}]
	\item \label{lem:eval:well-defined:i}
	Let $\mu'$ be a store such that $\dom{\mu(\Gamma)}\subseteq\dom{\mu'}$.
	If $\Gamma \vdash \e : \t$,
	then $\eval{\Gamma\!,\mu'}{\e_{\i}}$ is well-defined for all $\i\le\t$.
	(Note that $\eval{\Gamma\!,\mu'}{\e_{\i}} \in \VV\cup\{\bullet\}$.)
	\item \label{lem:eval:well-defined:ii}
	If $\Gamma \vdash \stmt^*\:\ok$,
	then $\x\in\dom{\Gamma}$ for all statements $\x\:\textbf{=}\:\e$ in $\stmt^*$.
	\item \label{lem:eval:well-defined:iii}
	If $\Gamma \vdash \stmt^*\:\ok$,
	then $\Gamma \vdash \e : \t$, where $\t=\Gamma(\x)$,
	for all statements $\x\:\textbf{=}\:\e$ that occur in $\stmt^*$.
\end{enumerate}
\begin{proof}
Part (\ref{lem:eval:well-defined:i}) is proved by induction on the structure of the expression $\e$. \\
~\\[-3mm]
Case $\e \equiv \x$: %
Since $\Gamma \vdash \e : \t$, inverting rule \tvar implies that %
$\x\in\dom{\Gamma}$ and $\t = \Gamma(\x)$.
By Lemma~\ref{lem:dynamic:store}, $\x_{\i}\in\dom{\mu(\Gamma)} \subseteq \dom{\mu'}$ for all $\i\le\t$.
It thus follows from Equation~\eqref{eq:eval:var} in Figure~\ref{fig:evaluation:expression} that $\eval{\Gamma\!,\mu'}{\e_{\i}}$ is well-defined. \\
~\\[-3mm]
Case $e \equiv \ez\,\textbf{\#}\eo$: %
By inverting rule \tprod, $\Gamma \vdash \ez : \t_0$ and $\Gamma \vdash \eo : \t_1$.
Thus, by the induction hypothesis, $\eval{\Gamma\!,\mu'}{\ez_{\i}}$ and $\eval{\Gamma\!,\mu'}{\eo_{\j}}$ are well-defined for $\i\le\t_0$ and $\j\le\t_1$.
It follows from Equation~\eqref{eq:eval:prod} in Figure~\ref{fig:evaluation:expression} that $\eval{\Gamma\!,\mu'}{\e_{\k}}$ is well-defined %
for $\k\le(d_{01},\dots,d_{0l_0},d_{11},\dots,d_{1l_1})$, where %
$(d_{01},\dots,d_{0l_0}) = \t_0$ and $(d_{11},\dots,d_{1l_1}) = \t_1$. \\
~\\[-3mm]
The remaining cases are handled analogously.
The division operator that occurs in Equation~\eqref{eq:eval:elem} in Figure~\ref{fig:evaluation:expression} poses no difficulty since \FNeval is allowed to take the value $\bullet$ (cf.~Definition~\ref{def:extended:arithmetic}).
The only interesting case is that of contraction. \\
~\\[-3mm]
Case $\e \equiv \ez\,\textbf{.}\,\textbf{[}m\,n\textbf{]}$: %
By inverting rule \tcontr, %
$\Gamma \vdash \ez : \t_0$, %
$\t_0 = (d_0,\dots,d_{k})$, %
and $d_m = d_n$.
Hence, $\eval{\Gamma\!,\mu'}{\ez_{\j}}$ is well-defined for all $\j\le\t_0$ by the induction hypothesis.
On the right-hand side of Equation~\eqref{eq:eval:contr} in Figure~\ref{fig:evaluation:expression}, the summation index $l$ takes values from $\{1, \dots, d_{m}\}$.
It is obvious from this that $\eval{\Gamma\!,\mu'}{\e_{\i}}$ is well-defined %
for $\i\le(d_1,\dots,\widehat{d_m},\dots,\widehat{d_n},\dots,d_k)$. \\
~\\[-3mm]
Parts~(\ref{lem:eval:well-defined:ii}) and~(\ref{lem:eval:well-defined:iii}) are proved by induction on the length of $\stmt^*$.
The proofs use the inversions of rules \okseq and \okstmt.
\end{proof}
\end{lemma}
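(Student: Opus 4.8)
The plan is to prove the three parts separately: part~(i) by structural induction on the expression $\e$, and parts~(ii) and~(iii) together by induction on the length of the statement sequence $\stmt^*$.

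For part~(i), in each case I would invert whichever rule of Figure~\ref{fig:typing:expression} must have produced the derivation of $\Gamma\vdash\e:\t$, apply the induction hypothesis to the immediate subexpressions (their types being furnished by the premises of that rule), and then inspect the matching defining equation of Figure~\ref{fig:evaluation:expression} to conclude that $\eval{\Gamma\!,\mu'}{\e_{\i}}$ is a finite combination of quantities already known to be well-defined. The base case $\e\equiv\x$ is where Lemma~\ref{lem:dynamic:store} is used: inverting \tvar gives $\x\in\dom{\Gamma}$ and $\t=\Gamma(\x)$, so $\x_{\i}\in\dom{\mu(\Gamma)}\subseteq\dom{\mu'}$ for every $\i\le\t$, and Equation~\eqref{eq:eval:var} then returns a defined value. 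The case $\e\equiv\textbf{(}\ez\,\textbf{)}$ follows from \tparen and Equation~\eqref{eq:eval:paren}; for $\e\equiv\ez\,\textbf{\#}\,\eo$ I would invert \tprod and use Equation~\eqref{eq:eval:prod}; for transposition I would invert \ttrans, note that the coordinate swap carries a multi-index bounded by the transposed type to one bounded by the original type, and use Equation~\eqref{eq:eval:trans}; and for the arithmetic operators I would invert \telem (or \tsmul, \tsdiv) and use Equation~\eqref{eq:eval:elem}, observing that division causes no difficulty because \FNeval is permitted to take the value $\bullet$ by Definition~\ref{def:extended:arithmetic}.

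The one case requiring genuine care is contraction, $\e\equiv\ez\,\textbf{.}\,\textbf{[}m\,n\textbf{]}$. Inverting \tcontr yields $\Gamma\vdash\ez:\t_0$ with $\t_0=(d_1,\dots,d_k)$ and the side condition $d_m=d_n$, so the induction hypothesis makes $\eval{\Gamma\!,\mu'}{\ez_{\j}}$ well-defined for all $\j\le\t_0$. In Equation~\eqref{eq:eval:contr} the summation index $l$ ranges over $\{1,\dots,\pi_m(\t_0)\}=\{1,\dots,d_m\}$, and because $d_m=d_n$ each multi-index $(i_1,\dots,i_{m-1},l,i_{m+1},\dots,i_{n-1},l,i_{n+1},\dots,i_k)$ occurring on the right-hand side is still $\le\t_0$ whenever $\i\le(d_1,\dots,\widehat{d_m},\dots,\widehat{d_n},\dots,d_k)$; hence every summand, and therefore the finite sum, is defined. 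I expect this bounds-bookkeeping to be essentially the only obstacle in the whole argument.

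For parts~(ii) and~(iii) I would induct on the length of $\stmt^*$. The empty sequence is covered by \okempty and both claims hold vacuously. For $\stmt^*\,\stmt$, inverting \okseq gives $\Gamma\vdash\stmt^*\,\ok$ and $\Gamma\vdash\stmt\,\ok$; the induction hypothesis handles the statements inside $\stmt^*$, and inverting \okstmt on $\stmt\equiv\x\,\textbf{=}\,\e$ delivers precisely $\x\in\dom{\Gamma}$ and $\Gamma\vdash\e:\Gamma(\x)$, which are parts~(ii) and~(iii) for $\stmt$ itself.
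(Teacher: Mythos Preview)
Your proposal is correct and follows essentially the same approach as the paper: structural induction on $\e$ for part~(i), with the variable case grounded in Lemma~\ref{lem:dynamic:store} and the contraction case handled by the bounds check you describe, and induction on the length of $\stmt^*$ using inversion of \okseq and \okstmt for parts~(ii) and~(iii). If anything, you spell out the transposition and parenthesis cases and the role of $d_m=d_n$ in the contraction bound a bit more explicitly than the paper does.
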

\newpage
As an immediate consequence of Lemma~\ref{lem:eval:well-defined}, we obtain the result that \FNeval is well-defined on all expressions that occur in a well-formed program.
\begin{corollary}[Well-defined evaluation]
\label{cor:eval:well-defined}
Let $\decl^*\:\stmt^*\:\:\ok$, 
let $\Gamma$ be the static context for which $\decl^*\vdash_{s} \Gamma$,
and let $\mu'$ be a store such that $\dom{\mu(\Gamma)} \subseteq \dom{\mu'}$.
Then, for all $\x\:\textbf{=}\:\e$ that occur in $\stmt^*$, the following hold:
\begin{enumerate}[{(i)}]
	\item
	$\Gamma \vdash \e : \t$, where $\t = \Gamma(\x)$,
	\item
	$\eval{\Gamma\!,\mu'}{\e_{\i}}$ is well-defined for all $\i\le\t$.
\end{enumerate}
\end{corollary}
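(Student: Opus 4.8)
The plan is to derive the corollary directly from Lemma~\ref{lem:eval:well-defined}, after unpacking the hypothesis $\decl^*\:\stmt^*\:\:\ok$. First I would invert the rule \okprog: the only way to derive $\decl^*\:\stmt^*\:\:\ok$ is via \okprog, whose premises supply a static context $\Gamma_0$ with $\decl^* \vdash_{s} \Gamma_0$ and $\Gamma_0 \vdash \stmt^*\:\ok$. Since the context built from $\decl^*$ is unique --- the rules \sempty and \svar in Figure~\ref{fig:typing:context} leave no freedom in the derivation of $\vdash_{s}$ --- we have $\Gamma_0 = \Gamma$, the context named in the statement. Hence $\Gamma \vdash \stmt^*\:\ok$.

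For part (i), I would apply parts~(\ref{lem:eval:well-defined:ii}) and~(\ref{lem:eval:well-defined:iii}) of Lemma~\ref{lem:eval:well-defined} to $\Gamma \vdash \stmt^*\:\ok$: part~(\ref{lem:eval:well-defined:ii}) ensures $\x\in\dom{\Gamma}$ for every statement $\x\:\textbf{=}\:\e$ occurring in $\stmt^*$, so that $\t = \Gamma(\x)$ is meaningful; and part~(\ref{lem:eval:well-defined:iii}) gives exactly $\Gamma \vdash \e : \t$ with $\t = \Gamma(\x)$ for each such statement. This is (i) verbatim.

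For part (ii), fix a statement $\x\:\textbf{=}\:\e$ occurring in $\stmt^*$ and set $\t = \Gamma(\x)$. By (i) we have $\Gamma \vdash \e : \t$. The corollary's hypothesis provides a store $\mu'$ with $\dom{\mu(\Gamma)} \subseteq \dom{\mu'}$, which is precisely the side condition of part~(\ref{lem:eval:well-defined:i}) of Lemma~\ref{lem:eval:well-defined}. Invoking that part yields that $\eval{\Gamma\!,\mu'}{\e_{\i}}$ is well-defined --- and lies in $\VV\cup\{\bullet\}$ --- for all $\i\le\t$, which is (ii).

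There is essentially no hard step: the real work has already been carried out in Lemma~\ref{lem:eval:well-defined}, and the corollary is a convenient repackaging that specializes the lemma to the statements of a well-formed program. The only point that deserves a word of care is the appeal to determinism of $\vdash_{s}$, needed to identify the context produced by inverting \okprog with the $\Gamma$ fixed in the statement; this is immediate from the shape of the rules in Figure~\ref{fig:typing:context}, but worth making explicit so that ``the static context for which $\decl^* \vdash_{s} \Gamma$'' is unambiguous.
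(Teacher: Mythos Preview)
Your proposal is correct and matches the paper's approach: the paper states the corollary as an immediate consequence of Lemma~\ref{lem:eval:well-defined} without spelling out a proof, and your argument is precisely the natural unpacking of that claim via inversion of \okprog followed by the three parts of the lemma. The extra remark on determinism of $\vdash_{s}$ is a welcome clarification the paper leaves implicit.
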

Corollary~\ref{cor:eval:well-defined} implicitly says that in a well-formed program there are no out-of-bounds read accesses to the store.
To see this, note that a read access occurs precisely when Equation~\eqref{eq:eval:var} from Figure~\ref{fig:evaluation:expression} is used in evaluating an expression.
Since Corollary~\ref{cor:eval:well-defined} guarantees that evaluation is well-defined, the store on the right-hand side of Equation~\eqref{eq:eval:var} can never be accessed outside of its domain.

The absence of out-of-bounds write accesses will follow from the key result of the next section, namely that well-formed programs can be fully evaluated.
\subsection{Progress and safety}
\label{sec:progress:safety}
We now establish the central result that the execution of well-formed programs %
cannot get stuck.
This means that for a well-formed program there always exists a store $\mu$ such that the program and $\mu$ are related by $\Downarrow$.
Thus, evaluation of programs in the model language according to Figure~\ref{fig:evaluation:statement} carries the flavor of big-step semantics.
However, for a well-formed program $P$, the rules \okprog, \okseq and \evprog, \evseq also guarantee that a program $P'$ that is formed of the same declarations as $P$ but contains only a subset of the statements in $P$ can also be fully evaluated.%
\footnote{%
	Since the programs $P$ and $P'$ contain the same declarations, the same static context $\Gamma$ is used in establishing that $P$ and $P'$ are well-formed.
	Hence, Corollary~\ref{cor:eval:well-defined} yields that evaluation of expressions is well-defined for both $P$ and $P'$.
}
This observation lets program evaluation appear closer to small-step semantics.
We therefore use the terms program evaluation and \emph{progress} interchangeably.
\begin{theorem}[Program evaluation]
\label{thm:program:evaluation}
Let $\decl^*\:\stmt^*\:\:\ok$, and let $\Gamma$ be the static context for which $\decl^*\vdash_{s} \Gamma$.
Then, there exists a unique store $\mu'$ such that $\decl^*\:\stmt^* \Downarrow \mu'$ and $\dom{\mu'} = \dom{\mu(\Gamma)}$.
\begin{proof}
By induction on the length of $\stmt^*$, which we denote as $N$. \\
~\\[-3mm]
$N=0$: %
Since $\langle\mu(\Gamma), \emptyset\rangle \rightarrow^0_{\Gamma} \langle\mu(\Gamma),\emptyset\rangle$, rule \evprog can be applied and $\mu'=\mu(\Gamma)$.
(Note that $\rightarrow^0_{\Gamma}$ is the same as equality $=$.) \\
~\\[-3mm]
$N+1$: %
Assume that $\decl^*\:\stmt^N\:\x\:\textbf{=}\:\e\:\:\ok$.
By the induction hypothesis,
\begin{align*}
	&\decl^* \stmt^N \Downarrow \mu_N, \\
	&\dom{\mu_N} = \dom{\mu(\Gamma)}.
\end{align*}
By Lemma~\ref{lem:eval:well-defined}(\ref{lem:eval:well-defined:ii}), %
$\x\in\dom{\Gamma}$.
Thus, let $\t = \Gamma(\x)$.
It follows from Lemma~\ref{lem:dynamic:store} %
that $\x_{\i} \in\dom{\mu(\Gamma)} = \dom{\mu_N}$ for all $\i\le\t$.
Moreover, Corollary~\ref{cor:eval:well-defined} implies %
that $\eval{\Gamma\!,\mu_N}{\e_{\i}}$ is well-defined for $\i\le\t$.
Therefore, we can apply rule \evstmt to obtain
\begin{align*}
	&\langle\mu_N, \x\:\textbf{=}\:\e\rangle \rightarrow_{\Gamma}
		\langle\mu',\emptyset \rangle , \quad
		\textrm{where} \:\:	\mu' = \mu_N\{\forall_{\i\le\t}\:
											\x_{\i}
											\mapsto r_{\i}\} , \\
	& \hspace{3.9cm} \textrm{and} \:\: r_{\i} = \eval{\Gamma\!,\mu_N}{\e_{\i}}.					
\end{align*}
Since $\x_{\i} \in \dom{\mu_N}$ for all $\i\le\t$, %
also $\dom{\mu'} = \dom{\mu_N} = \dom{\mu(\Gamma)}$.
Furthermore, rule \evseq can now be applied to deduce %
\begin{align*}
	\langle\mu(\Gamma), \stmt^N\:\x\:\textbf{=}\:\e \rangle
		\rightarrow_{\Gamma} \langle\mu',\emptyset\rangle
	.
\end{align*}
From this, an application of \evprog yields %
$
	\decl^*\:\stmt^N\:\x\:\textbf{=}\:\e \:\Downarrow\: \mu'
$.
Note that $\mu'$ is unique by construction.
\end{proof}
\end{theorem}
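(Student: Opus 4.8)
The plan is to proceed by induction on $N$, the length of the statement sequence $\stmt^*$, which mirrors the structure of the evaluation rules \evprog, \evseq, and \evstmt. First I would handle the base case $N=0$: since $\stmt^* = \emptyset$ and $\rightarrow^0_{\Gamma}$ is just equality, the premise $\langle\mu(\Gamma),\emptyset\rangle \rightarrow^{0,1}_{\Gamma} \langle\mu(\Gamma),\emptyset\rangle$ of rule \evprog is satisfied trivially, so $\decl^*\Downarrow\mu(\Gamma)$ with $\dom{\mu(\Gamma)} = \dom{\mu(\Gamma)}$ as required, and uniqueness is clear because the only way to derive $\decl^*\Downarrow\mu'$ is via \evprog with the store fixed by Lemma~\ref{lem:dynamic:store}.

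For the inductive step, write $\stmt^{N+1} \equiv \stmt^N\:\x\:\textbf{=}\:\e$. Inverting rule \okseq on the hypothesis $\Gamma \vdash \stmt^{N+1}\:\ok$ gives both $\Gamma \vdash \stmt^N\:\ok$ and $\Gamma \vdash \x\:\textbf{=}\:\e\:\ok$, so the induction hypothesis applies to $\decl^*\:\stmt^N$ and yields a unique $\mu_N$ with $\decl^*\:\stmt^N \Downarrow \mu_N$ and $\dom{\mu_N} = \dom{\mu(\Gamma)}$. Next I would assemble the three things needed to fire rule \evstmt on $\langle\mu_N, \x\:\textbf{=}\:\e\rangle$: (a) $\x\in\dom{\Gamma}$, which comes from Lemma~\ref{lem:eval:well-defined}(\ref{lem:eval:well-defined:ii}); setting $\t=\Gamma(\x)$, (b) $\x_{\i}\in\dom{\mu_N}$ for all $\i\le\t$, which follows because Lemma~\ref{lem:dynamic:store} gives $\x_{\i}\in\dom{\mu(\Gamma)}$ and $\dom{\mu(\Gamma)}=\dom{\mu_N}$; and (c) well-definedness of $r_{\i}=\eval{\Gamma\!,\mu_N}{\e_{\i}}$ for $\i\le\t$, which is exactly Corollary~\ref{cor:eval:well-defined} applied with $\mu'=\mu_N$ (legitimate since $\dom{\mu(\Gamma)}\subseteq\dom{\mu_N}$). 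With (a)--(c) in hand, \evstmt produces $\langle\mu_N,\x\:\textbf{=}\:\e\rangle\rightarrow_{\Gamma}\langle\mu',\emptyset\rangle$ with $\mu'=\mu_N\{\forall_{\i\le\t}\:\x_{\i}\mapsto r_{\i}\}$; since \evstmt only updates values at subscripted identifiers $\x_{\i}$ that are already in $\dom{\mu_N}$, we get $\dom{\mu'}=\dom{\mu_N}=\dom{\mu(\Gamma)}$.

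To conclude, I would combine the induction hypothesis derivation $\langle\mu(\Gamma),\stmt^N\rangle\rightarrow_{\Gamma}\langle\mu_N,\emptyset\rangle$ with $\langle\mu_N,\x\:\textbf{=}\:\e\rangle\rightarrow_{\Gamma}\langle\mu',\emptyset\rangle$ via rule \evseq to get $\langle\mu(\Gamma),\stmt^{N+1}\rangle\rightarrow_{\Gamma}\langle\mu',\emptyset\rangle$, and then apply \evprog (with $\mu=\mu(\Gamma)$) to obtain $\decl^*\:\stmt^{N+1}\Downarrow\mu'$. For uniqueness I would observe that every step was forced: $\Gamma$ and $\mu(\Gamma)$ are determined by $\decl^*$, $\mu_N$ is unique by the induction hypothesis, the evaluation function \FNeval is a genuine function so the $r_{\i}$ are uniquely determined, and hence $\mu'$ is unique. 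I expect the only subtle point to be marshalling the hypotheses of \evstmt correctly — in particular keeping the domain-inclusion bookkeeping straight so that Corollary~\ref{cor:eval:well-defined} legitimately applies with $\mu_N$ in place of the canonical store — rather than any genuinely hard argument; the theorem is essentially a standard progress-by-induction once Corollary~\ref{cor:eval:well-defined} is available.
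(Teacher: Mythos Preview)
Your proposal is correct and follows essentially the same approach as the paper: induction on the length of $\stmt^*$, with the inductive step using Lemma~\ref{lem:eval:well-defined}(\ref{lem:eval:well-defined:ii}), Lemma~\ref{lem:dynamic:store}, and Corollary~\ref{cor:eval:well-defined} to discharge the premises of \evstmt, then chaining \evseq and \evprog. Your write-up is in fact slightly more explicit than the paper's (you spell out the inversion of \okseq and the uniqueness argument in more detail), but the argument is the same.
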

The fact that well-formed programs can be fully evaluated almost immediately implies that programs are safe, %
in the sense that there are no out-of-bounds accesses.
This is because out-of-bounds accesses would evaluate the map $\mu$ outside of its domain, which in turn would cause program evaluation to become stuck.
The following corollary formalizes the absence of out-of-bounds accesses.
\begin{corollary}[Safety]
\label{cor:safety}
Let $\decl^*\:\stmt^*\:\:\ok$.
There are no out-of-bounds accesses to the store during the evaluation of $\stmt^*$.
\begin{proof}
For read accesses to the store, the claim follows already from Corollary~\ref{cor:eval:well-defined}, as explained at the end of Section~\ref{sec:properties}.
For write accesses, the claim is a consequence of Theorem~\ref{thm:program:evaluation} since complete evaluation of $\stmt^*$ requires successive applications of the rule \evstmt.
The premises of this rule ensure that the store $\mu$ is modified only in places that are already in $\dom{\mu}$.
\end{proof}
\end{corollary}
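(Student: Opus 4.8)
The plan is to split the argument by the kind of store access—reads versus writes—and to dispatch each using results already established. A \emph{read} access to the store occurs exactly when Equation~\eqref{eq:eval:var} is invoked during the evaluation of some expression $\e$, since that is the only place where the map $\mu$ is looked up. A \emph{write} access occurs exactly when rule \evstmt updates the store. So the first step is to make this dichotomy explicit and observe that every store access during the evaluation of $\stmt^*$ falls into one of these two categories.

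For read accesses, I would simply appeal to Corollary~\ref{cor:eval:well-defined}. Since $\decl^*\:\stmt^*\:\:\ok$, that corollary gives, for every statement $\x\:\textbf{=}\:\e$ occurring in $\stmt^*$ and every $\i\le\t$ with $\t=\Gamma(\x)$, that $\eval{\Gamma\!,\mu'}{\e_{\i}}$ is well-defined whenever $\dom{\mu(\Gamma)}\subseteq\dom{\mu'}$. Well-definedness of \FNeval precisely means that whenever Equation~\eqref{eq:eval:var} is reached, the subscripted identifier being looked up lies in the domain of the store; hence no out-of-bounds read occurs. (One should note in passing that, by Theorem~\ref{thm:program:evaluation}, the intermediate stores encountered while evaluating a prefix of $\stmt^*$ all have domain equal to $\dom{\mu(\Gamma)}$, so the hypothesis $\dom{\mu(\Gamma)}\subseteq\dom{\mu'}$ of the corollary is met throughout.)

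For write accesses, I would invoke Theorem~\ref{thm:program:evaluation}, which says that $\decl^*\:\stmt^*$ evaluates to a unique store $\mu'$ with $\dom{\mu'}=\dom{\mu(\Gamma)}$. Complete evaluation proceeds (via \evseq) by successive applications of \evstmt, and the conclusion of \evstmt only updates $\mu$ at entries $\x_{\i}$ with $\i\le\t$, each of which is required by the premise of \evstmt to already lie in $\dom{\mu}$ (equivalently, $\x_{\i}\in\dom{\mu(\Gamma)}$ by Lemma~\ref{lem:dynamic:store}). Thus every write targets a location already present in the store's domain, so no out-of-bounds write occurs either.

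I do not anticipate a genuine obstacle here: the corollary is essentially a corollary in the literal sense, repackaging Corollary~\ref{cor:eval:well-defined} and Theorem~\ref{thm:program:evaluation}. The only point requiring mild care is the informal-to-formal bridge—pinning down that ``out-of-bounds access'' means exactly ``evaluation of $\mu$ outside $\dom{\mu}$'' and that the only two syntactic sites where $\mu$ is touched are Equation~\eqref{eq:eval:var} (reads) and the conclusion of \evstmt (writes). Once that correspondence is stated, the two cited results close the argument immediately.
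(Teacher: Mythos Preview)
Your proposal is correct and follows essentially the same approach as the paper: split into reads versus writes, dispatch reads via Corollary~\ref{cor:eval:well-defined} (well-definedness of \FNeval means Equation~\eqref{eq:eval:var} never looks up outside $\dom{\mu}$), and dispatch writes via Theorem~\ref{thm:program:evaluation} together with the premise $\x_{\i}\in\dom{\mu}$ of \evstmt. Your version is simply more explicit about the informal-to-formal bridge and the fact that intermediate stores satisfy the domain hypothesis of Corollary~\ref{cor:eval:well-defined}, which the paper leaves implicit.
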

It is important to stress that the progress and safety results establish the sanity of the defined model language.
Theorem~\ref{thm:program:evaluation} implies that the evaluation scheme defined in Figures~\ref{fig:dynamic:store}--\ref{fig:evaluation:expression} is a sensible one: %
it is guaranteed that valid programs can be evaluated without getting stuck.
Corollary~\ref{cor:safety} is of particular importance since the model language provides an index-free way of manipulating tensor expressions.
Hence, accesses to elements of tensors, via multi-indices, are only inserted by an implementation of the model language, e.g.~by a model language compiler.
Corollary~\ref{cor:safety} guarantees that an implementation of the evaluation scheme from Figures~\ref{fig:dynamic:store}--\ref{fig:evaluation:expression} introduces no memory accesses that are out-of-bounds.
\subsection{Applicability and limitations of the model language}
\label{sec:limitations}
The model language is not intended for specifying entire numerical applications.
Instead, it focuses on expressing those parts of an application that specifically manipulate tensors using the previously defined operations, e.g.~contraction, transposition etc.
From a practical perspective, the model language should be thought of as a DSL or intermediate language that facilitates, for example, the generation of efficient code for performance-critical parts of an application by exploiting domain knowledge, analogous to~\cite{Kirby:2006,Luporini:2015:COFFEE,Springer:2017:TTC,Rink:2018:RWDSL}.
This means that there must be an interface between programs written in the model language and the ambient application, usually written in a general-purpose programming language.
The variables declared in the model language collectively constitute this interface.

According to Section~\ref{sec:evaluation}, it is assumed that the ambient application has allocated memory for the variables that are declared in the model language.
Thus, variables can be used to pass values between model language programs and the ambient application.
As already mentioned in Section~\ref{sec:evaluation}, the ambient application may (but need not) initialize variables before a model language program is executed; %
and the values that are assigned to variables in the model language can subsequently be accessed by the ambient application.

The model language supports only a single, global scope since it is only intended to model parts of a numerical application that are limited in size.
Moreover, allowing scopes to be nested would unnecessarily complicate the simple interface between model language programs and ambient applications since it is not obvious if and how variables declared in nested scopes should correspond to variables in the ambient application.

The control flow in numerical applications is typically highly structured: %
loop nests iterate over the possible values that can be taken by the indices of tensors.
The loop bounds that appear in these loop nests are precisely the tensor dimensions, which are typically specified as constants at compile time since this aids compilers in generating more efficient machine code, cf.~\cite{Luporini:2015:COFFEE,Rink:2018:RWDSL}.
Hence, the model language's requirement that tensor dimensions be specified as (compile-time) constants is not a practically relevant limitation.%
\footnote{%
	If this is seen to be a limitation, it can be overcome by deferring the compilation of model language programs until the ambient numerical application is run and tensor dimensions are known, cf.~\cite{Rink:2018:RWDSL}.
}
Of course, general control flow, beyond the structured control flow that is implicit in tensor evaluation and assignment, cannot be expressed in the model language since there are no syntactic constructs for branching or loops.
Once again, this is completely aligned with the separation between ambient applications and model language programs.
Control flow and any complex logic beyond tensor expressions must be specified in the ambient application, which is assumed to be written in a general-purpose programming language.
\section{Padding and alignment}
\label{sec:padding}
In this section we study an alternative implementation of the model language, namely we introduce padding in the memory layout of tensors.
Here, padding means that the memory allocated for any dimension of a tensor is rounded up to a multiple of a fixed value $\M\in\NN$.
Padding can positively affect performance, for example when the architecture of the targeted machine supports vector or SIMD instructions, cf.~\cite{Luporini:2015:COFFEE}.
By padding, the dimensions of tensors can be made multiples of the machine's vector length or SIMD width.
Vectorizing compilers can then produce more efficient machine code.%
\footnote{%
	This is due to the absence of scalar loop iterations that result from vectorization if tensor dimensions are not multiples of the vector length or SIMD width.
}
Another advantage of padding arises in conjunction with alignment: %
aligned memory accesses are usually faster than non-aligned ones, and %
if the memory allocated for a tensor is aligned, then padding can ensure that nested (sub-)tensors are also aligned, cf.~\cite{Luporini:2015:COFFEE}.

The advantage of having formalized the semantics of the model language in Section~\ref{sec:definition} is that an implementation with padding can be proven correct.
This, of course, also requires a formal specification of the implementation with padding, %
which relies on the following Definitions~\ref{def:length:rounding} and~\ref{def:controlled:divison}.

Henceforth, we refer to the constant $\M$ as the \emph{vector length}.
This terminology derives from choosing the value of $\M$ based on the target machine's vector length or SIMD width.
The following definition introduces notation for rounding up natural numbers to the nearest multiple of $\M$.
\begin{definition}[Vector length, rounding]
\label{def:length:rounding}
Let $\M \in \NN$ be the \emph{vector length}.
\begin{enumerate}[{(i)}]
	\item
	For $d\in\NN_0$, %
	let $\up{d}_{\M} = \min\{n\cdot \M : n\in\NN_0 , n\cdot \M \ge d\}$.
	\item
	For $k\in\NN$ and $\i = (i_1,\dots,i_k)\in\NN^k$, %
	let $\up{\i}_M = (\up{i_1}_{\M},\dots,\up{i_k}_{\M})$.
	\item
	For $()\in\NN^0$, let $\up{()}_{\M} = ()$. \\
	(Recall that, by convention, $\NN^0=\{()\}$.)
	\item
	For a static context $\Gamma$, let $\up{\Gamma}_{\M}$ be defined point-wise, i.e.
	\begin{align*}
		\up{\Gamma}_{\M}(\x) = \up{\Gamma(\x)}_{\M},
	\end{align*}		
	for all $\x\in\dom{\Gamma}$.
\end{enumerate}%
~\\[-11mm]
\end{definition}
Subsequently, $\M$ is assumed fixed.
Hence, we omit the subscript on $\up{\cdot}_{\M}$, and simply write $\up{\cdot}$.
Definition~\ref{def:length:rounding} immediately leads to the next lemma, which will be used in the subsequent formal analysis.
\begin{lemma}
\label{lem:padded:expression:typing}
Let $\Gamma$ be a static context.
If $\Gamma \vdash \e : \t$, then $\up{\Gamma} \vdash \e : \up{\t}$.
\begin{proof}
	Straightforward structural induction on the expression $\e$.
	The proof uses the fact that the rules for expression typing in Figure~\ref{fig:typing:expression} are syntax-directed.
\end{proof}
\end{lemma}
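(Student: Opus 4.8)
The plan is to prove Lemma~\ref{lem:padded:expression:typing} by structural induction on the expression $\e$, following exactly the syntax-directed structure of the typing rules in Figure~\ref{fig:typing:expression}. For each typing rule, the key observation is that applying $\up{\cdot}$ commutes with the type-forming operation in that rule: concatenation of tuples, swapping two entries, dropping two entries, or leaving the type unchanged. So in each case I would invert the typing derivation $\Gamma \vdash \e : \t$, apply the induction hypothesis to the immediate subexpressions, and then re-apply the same typing rule with $\up{\Gamma}$ in place of $\Gamma$.

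Concretely, I would proceed case by case. For $\e \equiv \x$ (rule \tvar): from $\x\in\dom{\Gamma}$ and $\t = \Gamma(\x)$ we get $\x\in\dom{\up{\Gamma}}$ and $\up{\Gamma}(\x) = \up{\Gamma(\x)} = \up{\t}$ by definition of $\up{\Gamma}$, so \tvar gives $\up{\Gamma} \vdash \x : \up{\t}$. The \tparen case is immediate from the induction hypothesis. For $\e \equiv \ez\,\textbf{\#}\,\eo$ (rule \tprod): inverting gives $\Gamma \vdash \ez : (d_{01},\dots,d_{0k})$ and $\Gamma \vdash \eo : (d_{11},\dots,d_{1l})$; the induction hypothesis yields $\up{\Gamma}\vdash \ez : (\up{d_{01}},\dots,\up{d_{0k}})$ and similarly for $\eo$; since $\up{(d_{01},\dots,d_{0k},d_{11},\dots,d_{1l})} = (\up{d_{01}},\dots,\up{d_{0k}},\up{d_{11}},\dots,\up{d_{1l}})$, re-applying \tprod gives the claim. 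The \ttrans case uses that $\up{\cdot}$ is applied componentwise, hence commutes with permuting entries. The \tcontr case is the one requiring a small extra remark: inverting gives $d_m = d_n$, and I must check that the premise of \tcontr still holds for $\up{\Gamma}$, i.e.\ $\up{d_m} = \up{d_n}$; this follows because $d_m = d_n$ implies $\up{d_m} = \up{d_n}$ (the function $\up{\cdot}$ is well-defined on $\NN_0$). Then dropping the $m$-th and $n$-th entries commutes with $\up{\cdot}$. The \telem, \tsmul, and \tsdiv cases are routine: \telem preserves the type, so the induction hypothesis on both operands plus re-applying \telem suffices; \tsmul and \tsdiv require additionally that a scalar type is preserved, i.e.\ $\up{()} = ()$, which holds by Definition~\ref{def:length:rounding}(iii), and then the same rule applies with $\up{\Gamma}$.

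The main (mild) obstacle is simply bookkeeping: ensuring that in the \tcontr and \tsmul/\tsdiv cases the side conditions ($d_m = d_n$ and the scalar-type premises) are transported correctly under $\up{\cdot}$, which amounts to the elementary facts that $\up{\cdot}$ respects equality of naturals and fixes the empty tuple. There is no genuine difficulty, which is why the paper summarizes the proof as ``straightforward structural induction.'' I would not spell out every arithmetic detail, but I would explicitly flag the \tcontr case as the one where a premise (not just the type) must be checked under rounding.
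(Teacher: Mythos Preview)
Your proposal is correct and follows exactly the approach the paper indicates: a structural induction on $\e$, using that the typing rules are syntax-directed. You have simply spelled out the case analysis (including the side-condition checks for \tcontr, \tsmul, \tsdiv) that the paper leaves implicit in the phrase ``straightforward structural induction.''
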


The equivalence of the model language implementation from Section~\ref{sec:definition} and the implementation with padding discussed in this section relies on the combination of two facts: %
(1) the operations in Figure~\ref{fig:evaluation:expression} are linear, and %
(2) padded memory can be filled with zeros.
The only place where linearity does not hold is in the denominator of division.
This motivates the introduction of the following \emph{controlled} division operation.%
\footnote{%
	We say that the introduced division operation is \emph{controlled} since dividing zero by zero does not lead to an uncontrolled singularity.%
}
\begin{definition}[Controlled division]
\label{def:controlled:divison}
The \emph{controlled division} operation is defined on $\VV\cup\{\bullet\}$ as in Definition~\ref{def:extended:arithmetic} except that
\begin{enumerate}[{(i)}]
	\item
	$0/0 = 0$,
	\item
	$0/\bullet = 0$.
\end{enumerate}
~\\[-11mm]
\end{definition}
\out{%
	This division operation is \emph{controlled} since $0/0 = 0$ holds generally in the limit where both the numerator and denominator approach zero, but the denominator does so in a more controlled way than the numerator.
}
From now on we assume that division, as it appears, for example, in Figure~\ref{fig:evaluation:expression}, is always controlled division.
The results from previous sections are not affected by this assumption.
\begin{figure*}[h!]
	\begin{minipage}{\textwidth}
		\begin{gather*}
		\infer[\dPADempty]
		{\emptyset \vdash_{pd} \bot}
		{} \\[2mm]
		\infer[\dPADvar]
		{\decl^*\:\:
			\textbf{var}\: \x\: \textbf{:}\: \textbf{[}d_1, \dots, d_k\textbf{]}
			\vdash_{pd} \mu\{\forall_{\i\le\up{\t}}\:\x_{\i} \mapsto x_{\i}\}}
		{\begin{array}{c}
			\decl^* \vdash_{pd} \mu \quad
			\t = (d_1, \dots, d_k) \\[1mm]
			\forall_{\i\le\t}\: x_{\i}\in\VV\cup\{\bullet\} \quad
			\forall_{\i\in(\t,\up{\t})}\: x_{\i} = 0
			\end{array}
		}
		\end{gather*}
		\vspace{-6mm}
		\captionof{figure}{%
			Formation of a padded dynamic store $\vdash_{pd}$.%
		}%
		\label{fig:padding:store}
	\end{minipage}%
	\\[5mm]
	\begin{minipage}{\textwidth}
		\begin{gather*}
		\infer[\evPADstmt]
		{\langle\mu, \x\: \textbf{=}\: \e\rangle
			\leadsto_{\Gamma}
			\langle \mu\{\forall_{\i\le\up{\t}}\:
			\x_{\i} \mapsto r_{\i}\},
			\emptyset\rangle}
		{\begin{array}{c}
			\x\in\dom{\Gamma} \quad
			\t = \Gamma(\x) \\[1mm]
			\forall_{\i\le\up{\t}} \:
			\x_{\i}\!\in\!\dom{\mu} \:\wedge\:
			r_{\i}\!=\!\eval{\up{\Gamma}\!,\mu}{\e_{\i}}
			\end{array}
		} \\[3mm]
		\infer[\evPADseq]
		{\langle\mu, \stmt^*\:\stmt\rangle \leadsto_{\Gamma} \langle\mu'', \emptyset\rangle}
		{\langle\mu, \stmt^*\rangle \leadsto_{\Gamma} \langle\mu', \emptyset\rangle \quad
			\langle\mu', \stmt\rangle \leadsto_{\Gamma} \langle\mu'', \emptyset\rangle} \\[2mm]
		\infer[\evPADprog]
		{\decl^*\:\stmt^* \Downarrow_{p} \mu'}
		{\decl^* \vdash_{s} \Gamma \quad
			\decl^* \vdash_{pd} \mu \quad
			\langle\mu, \stmt^*\rangle {\leadsto}_{\Gamma}^{0,1}
			\langle\mu', \emptyset\rangle}
		\end{gather*}
		\vspace{-6mm}
		\captionof{figure}{Statement and program evaluation with a padded store.}
		\label{fig:padding:statement}
	\end{minipage}%
\end{figure*}%
\subsection{Program evaluation with padding}
\label{sec:padded:evaluation}
The inference rules in Figures~\ref{fig:padding:store} and~\ref{fig:padding:statement} formally define how model language programs are evaluated when the dimensions of tensors are padded with zeros.
As before, it is assumed that an ambient numerical application allocates memory for tensors.
Allocations are further assumed to include padded memory, and rule \dPADvar captures these assumptions.
Note that, also as before, memory may be left uninitialized by the ambient application.
However, memory that has only been allocated for the purpose of padding must be initialized to zero, which is captured by the requirement $x_{\i} = 0$ for $\i\in(\t,\up{\t})$ in rule \dPADvar.
(Recall from Definition~\ref{def:multi:index}(\ref{def:multi:index:interval}) the meaning of ``$\in$'' in this context.)

As in Section~\ref{sec:definition}, we assume that the values $x_{\i}$ in the premises of rule \dPADvar are uniquely determined by the ambient application.
We then obtain the following uniqueness lemma for a well-formed store $\mu$ with padding.
\begin{lemma}[Unique padded store]
\label{lem:padded:store}
Let $\decl^* \vdash_{s} \Gamma$. %
There exists a unique store $\mu$
such that
\begin{enumerate}[{(i)}]
	\item
	$\decl^* \vdash_{pd} \mu$, and
	\item 
	for all $\x \in\dom{\Gamma}$,
	with $\t = \Gamma(\x)$,
	\begin{enumerate}[{a.}]
		\item
		$\forall_{\i\le\up{\t}}\:\x_{\i}\in\dom{\mu}$,
		\item 
		$\forall_{\i\le\t}\:\mu(\x_{\i}) = \mu(\Gamma)(\x_{\i})$,
		\item 
		$\forall_{\i\in(\t,\up{\t})}\:\mu(\x_{\i}) = 0$,
	\end{enumerate}
	where $\mu(\Gamma)$ is the uniquely defined store from Definition~\ref{def:unique:store}.
\end{enumerate}
~\\[-11mm]
\end{lemma}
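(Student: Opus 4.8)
The plan is to mirror the proof of Lemma~\ref{lem:dynamic:store}, proceeding by induction on the length $N$ of $\decl^*$. In the base case $N = 0$, the rules \sempty, \dempty, and \dPADempty force $\Gamma = \bot$, $\mu(\Gamma) = \bot$, and $\mu = \bot$ respectively; then $\dom{\Gamma} = \dom{\mu} = \emptyset$, so condition~(ii) holds vacuously, and $\mu = \bot$ is the unique store satisfying~(i) since \dPADempty is the only rule that applies.

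For the inductive step, write $\decl^{N+1} \equiv \decl^N\ \var\,\y : \textbf{[}d_1,\dots,d_k\textbf{]}$ with $\decl^N \vdash_s \Gamma_{\!N}$, let $\mu_N$ be the padded store furnished by the induction hypothesis (so $\decl^N \vdash_{pd} \mu_N$), put $\t = (d_1,\dots,d_k)$ so that $\Gamma = \Gamma_{\!N}\{\y \mapsto \t\}$, and let $\mu = \mu_N\{\forall_{\i\le\up{\t}}\,\y_{\i} \mapsto y_{\i}\}$ be the store produced by rule \dPADvar, where the $y_{\i}$ are the values fixed by the ambient application, i.e.\ $y_{\i} \in \VV\cup\{\bullet\}$ for $\i\le\t$ and $y_{\i} = 0$ for $\i\in(\t,\up{\t})$. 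Then $\dom{\Gamma} = \dom{\Gamma_{\!N}}\cup\{\y\}$, while $\dom{\mu_N}\subseteq\dom{\mu}$ and $\mu$ agrees with $\mu_N$ on $\dom{\mu_N}$. For an identifier $\x\in\dom{\Gamma_{\!N}}$, the three conditions a--c transfer directly from the induction hypothesis, once one observes that the \dvar-construction of $\mu(\Gamma)$ from Definition~\ref{def:unique:store} extends $\mu(\Gamma_{\!N})$ using the \emph{same} ambient values $y_{\i}$ for $\i\le\t$, so that $\mu(\Gamma)$ and $\mu(\Gamma_{\!N})$ coincide on $\dom{\mu(\Gamma_{\!N})}$. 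For the freshly declared $\y$: condition~a holds by construction of $\mu$; condition~b holds because $\mu(\y_{\i}) = y_{\i} = \mu(\Gamma)(\y_{\i})$ for $\i\le\t$; and condition~c holds because $\mu(\y_{\i}) = y_{\i} = 0$ for $\i\in(\t,\up{\t})$, which is precisely the extra premise of rule \dPADvar.

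For uniqueness, I would first appeal to the induction hypothesis to pin down $\mu_N$, then note that \dPADvar is the only rule that can handle the last declaration and that it forces $\dom{\mu}$ to be $\dom{\mu_N}$ augmented by $\{\y_{\i} : \i\le\up{\t}\}$, with the values on these new subscripts determined by the ambient application. I expect the only mildly delicate point to be the bookkeeping that relates the padded store to the unpadded reference store $\mu(\Gamma)$. Here one should record explicitly that $\t\le\up{\t}$ componentwise (Definition~\ref{def:length:rounding}), hence, by Definition~\ref{def:multi:index}(\ref{def:multi:index:interval}), the index set $\{\i : \i\le\up{\t}\}$ is the disjoint union of $\{\i : \i\le\t\}$ and $(\t,\up{\t})$; this is what makes conditions b and c together determine $\mu$ on the subscripts of $\y$, and it is what lets the ambient application feed consistent values into both the \dvar and \dPADvar formations on the overlap $\i\le\t$. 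Beyond this, the argument is a routine transcription of the proof of Lemma~\ref{lem:dynamic:store}.
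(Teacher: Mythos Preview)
Your proposal is correct and follows exactly the route the paper intends: the paper's own proof simply reads ``The proof is completely analogous to that of Lemma~\ref{lem:dynamic:store},'' and your induction on the length of $\decl^*$, with the base case forced by \sempty/\dPADempty and the step driven by \dPADvar, is precisely that analogy spelled out. Your additional bookkeeping about $\t\le\up{\t}$ and the disjoint decomposition of $\{\i:\i\le\up{\t}\}$ into $\{\i:\i\le\t\}$ and $(\t,\up{\t})$ is more detail than the paper gives, but it is the right detail.
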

\begin{proof}
The proof is completely analogous to that of Lemma~\ref{lem:dynamic:store}.	
\end{proof}
As before, we are compelled to afford a special symbol to the uniquely defined padded store.
\begin{definition}
\label{def:padded:store}
If $\decl^* \vdash_{s} \Gamma$, then the corresponding padded store that exists by~Lemma \ref{lem:padded:store} is denoted as $\up{\mu}(\Gamma)$.
\end{definition}

The rules for the evaluation of statements and programs in Figure~\ref{fig:padding:statement} are almost identical to the ones from Figure~\ref{fig:evaluation:statement}.
The key difference is that the rounded type $\up{\t}$ and context $\up{\Gamma}$ appear in rule \evPADstmt.
In Figure~\ref{fig:evaluation:expression}, the function \FNeval was defined for an arbitrary context $\Gamma$ and dynamic store $\mu$.
Using the context $\up{\Gamma}$ on \FNeval in \evPADstmt lets us reuse the definitions from Figure~\ref{fig:evaluation:expression}.
The relevance of using $\eval{\up{\Gamma}\!,\mu}{\cdot}$ in \evPADstmt will become apparent in the proofs of subsequent results.
\subsection{Progress and safety with padding}
\label{sec:padded:progress}
We now establish the progress and safety properties for the model language with padding.
Both are properties of the relation $\Downarrow_{p}$, and we establish these properties by following the same route as for the relation $\Downarrow$ in Sections~\ref{sec:properties} and~\ref{sec:progress:safety}.
Hence, the progress and safety results in the current section, and their proofs, are analogous to Theorem~\ref{thm:program:evaluation} and Corollary~\ref{cor:safety}.
\begin{lemma}
\label{lem:padded:eval:well-defined}
Assume $\decl^* \vdash_{s} \Gamma$.
Let $\mu'$ be a store such that %
\begin{enumerate}[{a.}]
	\item
	\label{lem:padded:eval:well-defined:a}
	$\dom{\up{\mu}(\Gamma)}\subseteq\dom{\mu'}$, and
	\item
	\label{lem:padded:eval:well-defined:b}
	for all $\x\in\dom{\Gamma}$, with $\t=\Gamma(\x)$, we have $\mu'(\x_{\i}) = 0$ if $\i\in(\t,\up{\t})$.
\end{enumerate}
Then, $\Gamma \vdash \e : \t$ implies
\begin{enumerate}[{(i)}]
	\item
	\label{lem:padded:eval:well-defined:i}
	$\eval{\up{\Gamma}\!,\mu'}{\e_{\i}}$ is well-defined for all $\i\le\up{\t}$, and
	\item
	\label{lem:padded:eval:well-defined:ii}
	$\eval{\up{\Gamma}\!,\mu'}{\e_{\i}} = 0$ for all $\i\in(\t,\up{\t})$.
\end{enumerate}
\begin{proof}
We proceed by structural induction on the expression $\e$.
The proof is very similar to the proof of Lemma~\ref{lem:eval:well-defined}, %
part~(\ref{lem:eval:well-defined:i}). \\
~\\[-3mm]
Case $\e \equiv \x$: %
The lemma follows directly from the definition of \FNeval in Equation~\eqref{eq:eval:var}, Figure~\ref{fig:evaluation:expression}.
Note that the requirement~\ref{lem:padded:eval:well-defined:b}.~is needed to establish part~(\ref{lem:padded:eval:well-defined:ii}). \\
~\\[-3mm]
Case $e \equiv \ez\,\textbf{\#}\eo$: %
Part~(\ref{lem:padded:eval:well-defined:i}) follows as in the proof of Lemma~\ref{lem:eval:well-defined}.
Part~(\ref{lem:padded:eval:well-defined:ii}) follows directly from the induction hypothesis since for $\i\in(\t,\up{\t})$ one of the factors on the right-hand side of Equation~\eqref{eq:eval:prod} in Figure~\ref{fig:evaluation:expression} is equal to zero. \\
~\\[-3mm]
Case $e \equiv \ez\,/\eo$: %
The same reasoning applies as in the proof of Lemma~\ref{lem:eval:well-defined}, %
i.e.~$\eval{\up{\Gamma}\!,\mu'}{\e_{\i}}$ is well-defined since it may take the value $\bullet$.
For $\i\in(\t,\up{\t})$, part~(\ref{lem:padded:eval:well-defined:ii}) follows from the property $0/0 = 0$ of controlled division.
When $\up{\Gamma} \vdash \eo : ()$, i.e.~in the case of division by a scalar, the property $0/\bullet = 0$ of controlled division is also required to establish part~(\ref{lem:padded:eval:well-defined:ii}). \\
~\\[-3mm]
Case $\e \equiv \ez\,\textbf{.}\,\textbf{[}m\,n\textbf{]}$: %
By inverting rule \tcontr, %
$\Gamma \vdash \ez : \t_0$, %
$\t_0 = (d_0,\dots,d_{k})$, %
and $d_m = d_n$.
Again, the fact that $\eval{\up{\Gamma}\!,\mu'}{\e_{\i}}$ is well-defined follows as in the proof of Lemma~\ref{lem:eval:well-defined}, but bear in mind that on the right-hand side of Equation~\eqref{eq:eval:contr} in Figure~\ref{fig:evaluation:expression}, the summation index $l$ now takes values from $\{1, \dots, \up{d_{m}}\}$. \\
~\\[-3mm]
The remaining cases are straightforward.
\end{proof}
\end{lemma}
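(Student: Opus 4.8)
The plan is to argue by structural induction on $\e$, closely following the proof of Lemma~\ref{lem:eval:well-defined}(\ref{lem:eval:well-defined:i}). Part~(\ref{lem:padded:eval:well-defined:i}) is, case by case, the same reasoning as there, only with the rounded type $\up{\t}$ and the rounded context $\up{\Gamma}$ taking the place of $\t$ and $\Gamma$; the facts that make this work are Lemma~\ref{lem:padded:expression:typing}, which supplies $\up{\Gamma}\vdash\e:\up{\t}$ from $\Gamma\vdash\e:\t$, and Lemma~\ref{lem:padded:store} together with the domain assumption $\dom{\up{\mu}(\Gamma)}\subseteq\dom{\mu'}$, which ensures that $\x_{\i}\in\dom{\mu'}$ whenever $\x\in\dom{\Gamma}$ and $\i\le\up{\t}$. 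Part~(\ref{lem:padded:eval:well-defined:ii}) is the genuinely new content and is carried along in the same induction, using the padding assumption $\mu'(\x_{\i})=0$ for $\i\in(\t,\up{\t})$ at the leaves and the fact that every operation in Figure~\ref{fig:evaluation:expression} propagates the value $0$ through the padded region --- this is the ``linearity plus zero padding'' principle mentioned earlier.

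For the leaf case $\e\equiv\x$: inverting rule \tvar gives $\x\in\dom{\Gamma}$ and $\t=\Gamma(\x)$, and by Equation~\eqref{eq:eval:var} one has $\eval{\up{\Gamma}\!,\mu'}{\e_{\i}}=\mu'(\x_{\i})$; this is well-defined for $\i\le\up{\t}$ by Lemma~\ref{lem:padded:store} and the domain assumption, and it equals $0$ for $\i\in(\t,\up{\t})$ directly by the padding assumption. For an outer product (rule \tprod), Definition~\ref{def:length:rounding} gives $\up{\t}=\up{\t_0}\up{\t_1}$, so every $\i\le\up{\t}$ splits as a pair $(\i_0,\i_1)$ with $\i_0\le\up{\t_0}$ and $\i_1\le\up{\t_1}$; part~(\ref{lem:padded:eval:well-defined:i}) follows from the induction hypothesis and Equation~\eqref{eq:eval:prod}, while for part~(\ref{lem:padded:eval:well-defined:ii}) the condition $\i\nleq\t$ forces $\i_0\nleq\t_0$ or $\i_1\nleq\t_1$, i.e.~one sub-index lies in the padded box of its operand's type, so the corresponding factor in Equation~\eqref{eq:eval:prod} is $0$ by the induction hypothesis and hence so is the product. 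A transposition (rule \ttrans) merely permutes index slots together with the matching type components, so $\i\le\up{\t}$, respectively $\i\in(\t,\up{\t})$, holds exactly when the correspondingly permuted index stands in the same relation to the permuted type, and both parts reduce directly to the induction hypothesis.

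For an element-wise operation $\ez\odot\eo$ (rule \telem), both operands carry the type $\t$; part~(\ref{lem:padded:eval:well-defined:i}) follows from the induction hypothesis and Equation~\eqref{eq:eval:elem}, division causing no difficulty since \FNeval may take the value $\bullet$, and part~(\ref{lem:padded:eval:well-defined:ii}) holds because on $(\t,\up{\t})$ both operands evaluate to $0$ by the induction hypothesis, and $0+0=0-0=0*0=0$, while $0/0=0$ for controlled division. The scalar cases (rules \tsmul, \tsdiv) are analogous: the scalar operand has rounded type $()$ and the tensor operand rounded type $\up{\t_1}$, respectively $\up{\t_0}$, and on the padded region the tensor factor is $0$ by the induction hypothesis (for \tsdiv one additionally uses $0/\bullet=0$ of controlled division). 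The remaining, and most delicate, case is contraction $\ez\,\textbf{.}\,\textbf{[}m\,n\textbf{]}$ (rule \tcontr): inverting the rule gives $\Gamma\vdash\ez:\t_0=(d_1,\dots,d_k)$ with $d_m=d_n$, and under $\up{\Gamma}$ the summation index $l$ in Equation~\eqref{eq:eval:contr} runs over $1,\dots,\pi_m(\up{\t_0})=\up{d_m}$, so well-definedness follows exactly as in Lemma~\ref{lem:eval:well-defined}; for part~(\ref{lem:padded:eval:well-defined:ii}), if the free index $\i\in(\t,\up{\t})$ then some \emph{retained} coordinate of $\i$ exceeds the corresponding $d_j$, and that coordinate reappears unchanged in the reconstructed index of $\ez$ for \emph{every} value of $l$, so each reconstructed index lies in $(\t_0,\up{\t_0})$ and is $0$ by the induction hypothesis, whence the whole sum is $0$.

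I expect the contraction step to be the main obstacle: one must check carefully that padding a retained dimension forces the entire reconstructed index of $\ez$ outside its unpadded box regardless of the contracted coordinate $l$, so that the full sum --- and not merely an individual term --- vanishes. This is precisely why the right invariant to propagate through the induction for part~(\ref{lem:padded:eval:well-defined:ii}) is the value-$0$ conclusion rather than mere well-definedness; all the other cases are routine, amounting to the componentwise-rounding bookkeeping of Definition~\ref{def:length:rounding} laid on top of the argument already used for Lemma~\ref{lem:eval:well-defined}.
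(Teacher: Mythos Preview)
Your proposal is correct and follows essentially the same route as the paper: structural induction on $\e$, with the same case split and the same appeals to Lemma~\ref{lem:eval:well-defined}, Lemma~\ref{lem:padded:expression:typing}, and controlled division. If anything you are more thorough than the paper in one spot --- the paper's contraction case treats only part~(\ref{lem:padded:eval:well-defined:i}) explicitly and leaves part~(\ref{lem:padded:eval:well-defined:ii}) to ``the remaining cases are straightforward,'' whereas you spell out why a padded \emph{retained} coordinate forces every summand in Equation~\eqref{eq:eval:contr} to lie in $(\t_0,\up{\t_0})$ and hence vanish.
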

Note that the proof of part~(\ref{lem:padded:eval:well-defined:ii}) of Lemma~\ref{lem:padded:eval:well-defined} relies on the linearity of the arithmetic operations that occur in the definition of \FNeval in Figure~\ref{fig:evaluation:expression} %
(and on controlled division from Definition~\ref{def:controlled:divison}).

The following progress theorem is analogous to Theorem~\ref{thm:program:evaluation}. It also says that the final store is still suitably padded with zeros.
\begin{theorem}[Program evaluation with a padded store]
\label{thm:padded:program:evaluation}
Let $\decl^*\:\stmt^*\:\:\ok$, and let $\Gamma$ be the static context for which $\decl^*\vdash_{s} \Gamma$.
Then, there exists a unique store $\mu'$ such that $\decl^*\:\stmt^* \Downarrow_{p} \mu'$ and $\dom{\mu'} = \dom{\up{\mu}(\Gamma)}$.
Moreover, $\mu'(\x_{\i}) = 0$ %
if $\x\in\dom{\Gamma}$, $\t = \Gamma(\x)$, and $\i\in(\t,\up{\t})$.
\begin{proof}
The proof is similar to that of Theorem~\ref{thm:program:evaluation}. %
Hence, this proof also proceeds by induction on the length of $\stmt^*$, which we denote as $N$. \\
~\\[-3mm]
$N=0$: %
Since $\langle\up{\mu}(\Gamma), \emptyset\rangle \leadsto^0_{\Gamma} \langle\up{\mu}(\Gamma),\emptyset\rangle$, it can be concluded from \evPADprog that $\mu'=\up{\mu}(\Gamma)$.
Lemma~\ref{lem:padded:store} then implies $\mu'(\x_{\i}) = 0$ for $\x\in\dom{\Gamma}$, $\t = \Gamma(\x)$, and $\i\in(\t,\up{\t})$. \\
~\\[-3mm]
$N+1$: %
Assume that $\decl^*\:\stmt^N\:\x\:\textbf{=}\:\e\:\:\ok$.
By the induction hypothesis,
\begin{align*}
	&\decl^*\:\stmt^N \Downarrow_{p} \mu_N, \\
	&\dom{\mu_N} = \dom{\up{\mu}(\Gamma)}.
\end{align*}
By Lemma~\ref{lem:eval:well-defined}(\ref{lem:eval:well-defined:ii}), %
$\x\in\dom{\Gamma}$.
Thus, let $\t = \Gamma(\x)$.
It follows from Lemma~\ref{lem:padded:store} %
that $\x_{\i} \in\dom{\up{\mu}(\Gamma)} = \dom{\mu_N}$ for all $\i\le\up{\t}$.
Also by the induction hypothesis, $\mu_N$ satisfies the requirements of Lemma~\ref{lem:padded:eval:well-defined}.
We can therefore apply rule \evPADstmt to obtain
\begin{align*}
	&\langle\mu_N, \x\:\textbf{=}\:\e\rangle
		\leadsto_{\Gamma} \langle\mu',\emptyset \rangle , \quad
		\textrm{where} \:\:	\mu' = \mu_N\{\forall_{\i\le\up{\t}}\:
			\x_{\i} \mapsto r_{\i}\} , \\
	& \hspace{3.9cm} \textrm{and} \:\: r_{\i} = \eval{\up{\Gamma}\!,\mu_N}{\e_{\i}}.					
\end{align*}
Since $\x_{\i} \in \dom{\mu_N}$ for all $\i\le\up{\t}$, %
also $\dom{\mu'} = \dom{\mu_N} = \dom{\up{\mu}(\Gamma)}$.
Applications of the rules \evPADseq and \evPADprog now yield the desired
\begin{align*}
	\decl^*\:\stmt^N\:\x\:\textbf{=}\:\e \:\Downarrow_{p}\: \mu'
	.
\end{align*}
Finally, let $\y\in\dom{\Gamma}$ and $\t_1 = \Gamma(\y)$.
If $\y \not\equiv \x$, then $\mu'(\y_{\i}) = \mu_N(\y_{\i}) = 0$ %
for $\i\in(\t_1,\up{\t_1})$ by the induction hypothesis.
If $\y \equiv \x$ and $\i\in(\t_1,\up{\t_1})$, then $\mu'(\y_{\i}) = \mu'(\x_{\i}) = r_i = 0$ %
follows by the construction of $\mu'$ and Lemma~\ref{lem:padded:eval:well-defined}(\ref{lem:padded:eval:well-defined:ii}).
\end{proof}
\end{theorem}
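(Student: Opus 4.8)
The plan is to prove the theorem by induction on the length $N$ of the statement sequence $\stmt^*$, following the same route as the proof of Theorem~\ref{thm:program:evaluation}, but carrying the zero-padding claim (``$\mu'(\x_{\i}) = 0$ whenever $\x\in\dom{\Gamma}$, $\t=\Gamma(\x)$, $\i\in(\t,\up{\t})$'') as part of the induction hypothesis rather than treating it as an afterthought. For the base case $N=0$, the relation ${\leadsto}_{\Gamma}^{0,1}$ is applied zero times, so rule \evPADprog forces $\mu' = \up{\mu}(\Gamma)$; the identity $\dom{\mu'} = \dom{\up{\mu}(\Gamma)}$ is then trivial, and the zero-padding of $\mu'$ is exactly part~(ii.c) of Lemma~\ref{lem:padded:store}.

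For the inductive step I would write $\stmt^{N+1} \equiv \stmt^{N}\,\x\,\textbf{=}\,\e$ and invert the rules \okprog, \okseq, \okstmt to obtain $\x\in\dom{\Gamma}$ and $\Gamma\vdash\e:\t$ with $\t=\Gamma(\x)$ (these are Lemma~\ref{lem:eval:well-defined}(\ref{lem:eval:well-defined:ii}) and~(\ref{lem:eval:well-defined:iii})). By the induction hypothesis there is a unique $\mu_N$ with $\decl^*\,\stmt^N \Downarrow_{p} \mu_N$, $\dom{\mu_N} = \dom{\up{\mu}(\Gamma)}$, and $\mu_N$ zero on every padding region. The crucial step is to observe that $\mu_N$ therefore satisfies hypotheses~a.~and~b.~of Lemma~\ref{lem:padded:eval:well-defined}, which then supplies both that $\eval{\up{\Gamma}\!,\mu_N}{\e_{\i}}$ is well-defined for all $\i\le\up{\t}$ and that it vanishes for $\i\in(\t,\up{\t})$. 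Combined with $\x_{\i}\in\dom{\up{\mu}(\Gamma)} = \dom{\mu_N}$ for $\i\le\up{\t}$ (from Lemma~\ref{lem:padded:store}), this licenses an application of \evPADstmt, yielding $\langle\mu_N,\x\,\textbf{=}\,\e\rangle \leadsto_{\Gamma} \langle\mu',\emptyset\rangle$ with $\mu' = \mu_N\{\forall_{\i\le\up{\t}}\,\x_{\i}\mapsto r_{\i}\}$ and $r_{\i} = \eval{\up{\Gamma}\!,\mu_N}{\e_{\i}}$. Since the updated entries were already in $\dom{\mu_N}$, the domain is unchanged, so $\dom{\mu'} = \dom{\up{\mu}(\Gamma)}$; applying \evPADseq and then \evPADprog gives $\decl^*\,\stmt^{N+1} \Downarrow_{p} \mu'$, and $\mu'$ is unique because $\mu_N$ is unique and \evPADstmt fixes the update completely. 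Finally I would re-establish the zero-padding of $\mu'$ by cases on a variable $\y\in\dom{\Gamma}$: if $\y\not\equiv\x$ its entries are inherited from $\mu_N$ and covered by the induction hypothesis, and if $\y\equiv\x$ the new values $r_{\i}$ at indices $\i\in(\t,\up{\t})$ are $0$ by Lemma~\ref{lem:padded:eval:well-defined}(\ref{lem:padded:eval:well-defined:ii}).

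The main obstacle — and the reason the theorem has to include the ``moreover'' clause — is that the inductive step genuinely requires $\mu_N$ to be zero on the padding region, not merely defined on a large enough domain: Lemma~\ref{lem:padded:eval:well-defined} delivers the vanishing of \FNeval off the ``true'' index range only under hypothesis~b., and it is precisely this vanishing that is needed to reproduce the hypothesis for the next statement. Without strengthening the induction hypothesis in this way, the recursion does not close. A secondary point to get right is the padding bookkeeping: rule \evPADstmt evaluates $\e$ in the rounded context $\up{\Gamma}$ and iterates over $\i\le\up{\t}$ rather than $\i\le\t$, so one must line these up with the conclusions of Lemma~\ref{lem:padded:eval:well-defined}, which are likewise phrased in terms of $\up{\Gamma}$ and $\up{\t}$; Lemma~\ref{lem:padded:expression:typing} is what certifies that $\up{\t}$ is the correct range for the right-hand side $\e$. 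Apart from these points, the argument is a routine transcription of the proof of Theorem~\ref{thm:program:evaluation}.
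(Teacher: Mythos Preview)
Your proposal is correct and matches the paper's proof almost step for step: induction on the length of $\stmt^*$, base case via \evPADprog and Lemma~\ref{lem:padded:store}, inductive step via Lemma~\ref{lem:eval:well-defined}(\ref{lem:eval:well-defined:ii}), Lemma~\ref{lem:padded:store}, Lemma~\ref{lem:padded:eval:well-defined}, then \evPADstmt, \evPADseq, \evPADprog, and the same case split on $\y\equiv\x$ versus $\y\not\equiv\x$ for the zero-padding. Your additional remarks about why the zero-padding clause must be carried through the induction and about lining up $\up{\Gamma}$, $\up{\t}$ with Lemma~\ref{lem:padded:eval:well-defined} are accurate and make the dependencies more explicit than the paper does.
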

As previously, this progress result also implies that model language programs implemented with a padded store are safe, in the sense that there are no out-of-bounds memory accesses.
\begin{corollary}[Safety with a padded store]
\label{cor:padded:safety}
Let $\decl^*\:\stmt^*\:\:\ok$.
There are no out-of-bounds accesses to the padded store during the evaluation of $\stmt^*$ as defined by the rules in Figure~\ref{fig:padding:statement}.
\end{corollary}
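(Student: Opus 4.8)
The plan is to transcribe the proof of Corollary~\ref{cor:safety} into the padded setting, splitting the store accesses that occur during evaluation into two kinds: reads, which happen only through Equation~\eqref{eq:eval:var} of Figure~\ref{fig:evaluation:expression} when \FNeval looks up a variable, and writes, which happen only when rule \evPADstmt updates the store. First I would dispose of writes. The sole store-modifying rule is \evPADstmt, and its conclusion changes $\mu$ only at the positions $\x_{\i}$ with $\i\le\up{\t}$, where $\t=\Gamma(\x)$; but the premise of \evPADstmt requires $\x_{\i}\in\dom{\mu}$ for exactly those $\i$, so every written position already lies in the domain of the store. By Theorem~\ref{thm:padded:program:evaluation} the complete evaluation of $\stmt^*$ is realized by at most one application of $\leadsto_{\Gamma}$, which decomposes through \evPADseq into a chain of applications of \evPADstmt, so this accounts for every write performed during evaluation; hence no write is out of bounds.

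For reads I would argue that \FNeval is well-defined on every right-hand side that is evaluated, so that every lookup in Equation~\eqref{eq:eval:var} lands inside the current store's domain. Since $\decl^*\:\stmt^*\:\:\ok$, rule \okprog gives $\Gamma\vdash\stmt^*\:\ok$, and then Lemma~\ref{lem:eval:well-defined}(\ref{lem:eval:well-defined:iii}) yields $\Gamma\vdash\e:\t$ with $\t=\Gamma(\x)$ for every statement $\x\:\textbf{=}\:\e$ in $\stmt^*$. Each intermediate store $\mu_N$ arising along the evaluation satisfies the hypotheses of Lemma~\ref{lem:padded:eval:well-defined} — containment of $\dom{\up{\mu}(\Gamma)}$ and the zero-padding condition on indices in $(\t_1,\up{\t_1})$ — which is precisely the invariant maintained by the induction in the proof of Theorem~\ref{thm:padded:program:evaluation}. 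Therefore Lemma~\ref{lem:padded:eval:well-defined}(\ref{lem:padded:eval:well-defined:i}) gives that $\eval{\up{\Gamma}\!,\mu_N}{\e_{\i}}$ is well-defined for all $\i\le\up{\t}$, which is exactly the index range over which \evPADstmt evaluates the right-hand side. Since Equation~\eqref{eq:eval:var} is the only production in which a read occurs, well-definedness of \FNeval forces every such read to be in bounds.

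The one point that needs care — and I expect it to be the main obstacle — is aligning the index ranges: \evPADstmt evaluates $\e$ over the padded range $\i\le\up{\t}$, not $\i\le\t$, so I must make sure the whole recursive computation of \FNeval stays inside the padded domain created by rule \dPADvar and fixed by Lemma~\ref{lem:padded:store}. This is exactly why \evPADstmt uses $\eval{\up{\Gamma}\!,\mu}{\cdot}$ and why Lemma~\ref{lem:padded:expression:typing} is invoked: it promotes $\Gamma\vdash\e:\t$ to $\up{\Gamma}\vdash\e:\up{\t}$, so the sub-expressions of $\e$ are typed in $\up{\Gamma}$ and every variable occurrence is accessed only at multi-indices bounded by the corresponding padded type. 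With that bookkeeping in place, the argument for both reads and writes is a direct transcription of Section~\ref{sec:progress:safety}, and the corollary follows.
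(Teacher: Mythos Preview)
Your proposal is correct and follows essentially the same approach as the paper: the paper gives no explicit proof of Corollary~\ref{cor:padded:safety}, remarking only that ``as previously, this progress result also implies that model language programs implemented with a padded store are safe,'' i.e.\ the argument of Corollary~\ref{cor:safety} transcribes directly using Theorem~\ref{thm:padded:program:evaluation} and Lemma~\ref{lem:padded:eval:well-defined} in place of Theorem~\ref{thm:program:evaluation} and Corollary~\ref{cor:eval:well-defined}. Your write-up is in fact more detailed than what the paper provides, and your observation about the padded index range and the role of Lemma~\ref{lem:padded:expression:typing} is exactly the point the paper alludes to when it says the relevance of using $\eval{\up{\Gamma}\!,\mu}{\cdot}$ in \evPADstmt ``will become apparent in the proofs of subsequent results.''
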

\subsection{Simulation with padding}
\label{sec:padded:simulation}
In this section we finally arrive at the central result that program evaluation with a padded dynamic store simulates the original program evaluation from Figure~\ref{fig:evaluation:statement}.
In other words, evaluation with a padded store produces the same values as the original evaluation, and is therefore equivalent to the original evaluation.
As a first step towards this equivalence result, the next lemma relates evaluation in the context $\up{\Gamma}$ to evaluation in the original context $\Gamma$.
\begin{lemma}
\label{lem:padded:evaluation}
Let $\decl^*\vdash_{s} \Gamma$.
Let $\mu_1$ and $\mu_2$ be stores such that
\begin{enumerate}[{a.}]
	\item
	\label{lem:padded:evaluation:a}
	$\dom{\mu(\Gamma)} \subseteq \dom{\mu_1}$ and %
	$\dom{\up{\mu}(\Gamma)} \subseteq \dom{\mu_2}$,
	\item
	\label{lem:padded:evaluation:b}
	for all $x\in\dom{\Gamma}$ with $\t_{\x} = \Gamma(x)$,
	\begin{align*}
		\mu_2(\x_{\i}) = \left\{
			\begin{array}{ll}
				\mu_1(\x_{\i}) \,, & \textrm{if} \:\: \i\le \t_{\x} \\
				0 \,, & \textrm{if} \:\: \i\in(\t_{\x},\up{\t_{\x}})
			\end{array}
			 \right.
			 .
	\end{align*}
\end{enumerate}
If $\Gamma \vdash \e : \t$, %
then $\eval{\Gamma\!,\mu_1}{\e_{\i}} = \eval{\up{\Gamma}\!,\mu_2}{\e_{\i}}$ %
for all $\i\le\t$.
\begin{proof}
The proof proceeds by structural induction on $\e$. \\
~\\[-3mm]
Case $\e \equiv \x$: %
Requirement~\ref{lem:padded:evaluation:b}.~guarantees that the claim of the lemma holds. \\
~\\[-3mm]
Case $\e \equiv \ez \odot \eo$: %
If $\odot\equiv *$ and $\Gamma \vdash \ez : ()$, %
then $\up{\Gamma} \vdash \ez : ()$ by Lemma~\ref{lem:padded:expression:typing}, %
and the claim follows from the induction hypothesis.
Similarly for $\odot\equiv /$ and $\Gamma \vdash \eo : ()$.
The remaining cases follow at once from the induction hypothesis. \\
~\\[-3mm]
Case $\e \equiv \ez\,\textbf{.}\,\textbf{[}m\,n\textbf{]}$: %
By $\Gamma \vdash \e : \t$ and inversion of \tcontr, we have $\Gamma \vdash \ez : \t_0$.
It then follows from Lemma~\ref{lem:padded:expression:typing} that $\up{\Gamma} \vdash \ez : \up{\t_0}$.
Thus, if $\pi_m(\t_0) = d_m$, then $\pi_m(\up{\t_0}) = \up{d_m}$.
Therefore,
\begin{align*}
	&\eval{\up{\Gamma}\!,\mu_2}{
		\ez\,\textbf{.}\,\textbf{[}m\, n\textbf{]}
		_{i_1\dots\widehat{i_m}\dots \widehat{i_n}\dots i_k}
	} \\
	& = 
	\sum_{l=1}^{\up{d_m}}
	\eval{\up{\Gamma}\!,\mu_2}{
		\ez
		_{i_1\dots i_{m-1} l\,i_{m+1}\dots i_{n-1} l\,i_{n+1}\dots i_k}
	} \\
	& =
	\sum_{l=1}^{d_m}
	\eval{\up{\Gamma}\!,\mu_2}{
		\ez
		_{i_1\dots i_{m-1} l\,i_{m+1}\dots i_{n-1} l\,i_{n+1}\dots i_k}
	} ,
\end{align*}
where the second identity holds by Lemma~\ref{lem:padded:eval:well-defined}(\ref{lem:padded:eval:well-defined:ii}),
which, in turn, is applicable by virtue of requirements~\ref{lem:padded:evaluation:a}.,~\ref{lem:padded:evaluation:b}.
The desired result now follows from the induction hypothesis. \\
~\\[-3mm]
The remaining cases are straightforward.
\end{proof}
\end{lemma}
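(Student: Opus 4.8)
The plan is to prove the statement by structural induction on $\e$, in close parallel with the proof of Lemma~\ref{lem:eval:well-defined}(\ref{lem:eval:well-defined:i}), using Lemma~\ref{lem:padded:expression:typing} to transport typing judgements from $\Gamma$ to $\up{\Gamma}$ and Lemma~\ref{lem:padded:eval:well-defined} to pin down the values taken by padded evaluation. First I would dispatch the base case $\e\equiv\x$: by Equation~\eqref{eq:eval:var} the two sides reduce to $\mu_1(\x_{\i})$ and $\mu_2(\x_{\i})$, and for $\i\le\t=\Gamma(\x)$ requirement~\ref{lem:padded:evaluation:b}.~makes these equal. The cases $\e\equiv\textbf{(}\ez\textbf{)}$, $\e\equiv\ez\,\textbf{\#}\,\eo$, $\e\equiv\ez\,\textbf{\^{}}\,\textbf{[}m\,n\textbf{]}$ and $\e\equiv\ez\odot\eo$ all follow from a direct appeal to the induction hypothesis, once one observes that the index range on the right-hand side of the relevant equation among \eqref{eq:eval:paren}--\eqref{eq:eval:elem} decomposes (for $\textbf{\#}$, splitting $\i$ into the first and last blocks of components) or permutes (for $\textbf{\^{}}$, swapping slots $m$ and $n$) compatibly with the corresponding decomposition/permutation of $\t$, so that sub-indices of the form $\le\t_0$, $\le\t_1$ arise and the hypothesis applies; a product or $\odot$ of equal values is equal. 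For the scalar variants of the last case ($\odot\equiv *$ with $\Gamma\vdash\ez:()$, or $\odot\equiv /$ with $\Gamma\vdash\eo:()$) I would additionally invoke Lemma~\ref{lem:padded:expression:typing} with $\up{()}=()$ so that the same branch of Equation~\eqref{eq:eval:elem} is selected under $\up{\Gamma}$.

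The interesting case is contraction, $\e\equiv\ez\,\textbf{.}\,\textbf{[}m\,n\textbf{]}$. Inverting rule~\tcontr\ gives $\Gamma\vdash\ez:\t_0$ with $\pi_m(\t_0)=\pi_n(\t_0)=d_m$, and Lemma~\ref{lem:padded:expression:typing} yields $\up{\Gamma}\vdash\ez:\up{\t_0}$ with $\pi_m(\up{\t_0})=\pi_n(\up{\t_0})=\up{d_m}$. Thus in Equation~\eqref{eq:eval:contr} the summation index runs over $1,\dots,d_m$ when evaluating under $\Gamma$ but over $1,\dots,\up{d_m}$ when evaluating under $\up{\Gamma}$. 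The key point is that for $d_m<l\le\up{d_m}$ the multi-index $(i_1,\dots,i_{m-1},l,i_{m+1},\dots,i_{n-1},l,i_{n+1},\dots,i_k)$ lies in $(\t_0,\up{\t_0})$ in the sense of Definition~\ref{def:multi:index}(\ref{def:multi:index:interval}): it is $\le\up{\t_0}$ because each non-contracted component $i_p$ is bounded by the corresponding entry $d_p$ of $\t_0$, hence by $\up{d_p}$, while $l\le\up{d_m}=\pi_m(\up{\t_0})=\pi_n(\up{\t_0})$; and it is $\nleq\t_0$ because its $m$-th component is $l>d_m=\pi_m(\t_0)$. By Lemma~\ref{lem:padded:eval:well-defined}(\ref{lem:padded:eval:well-defined:ii}) — whose hypotheses~\ref{lem:padded:eval:well-defined:a}.~and~\ref{lem:padded:eval:well-defined:b}.~are exactly requirements~\ref{lem:padded:evaluation:a}.~and~\ref{lem:padded:evaluation:b}.~here — these extra summands vanish, so the $\up{\Gamma}$-sum collapses to a sum over $1,\dots,d_m$; for $l\le d_m$ the relevant multi-index is $\le\t_0$, so a term-by-term application of the induction hypothesis matches it against the $\Gamma$-sum.

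I expect the contraction case to be the only real obstacle, and within it the bookkeeping that certifies membership of the ``overshoot'' multi-indices in $(\t_0,\up{\t_0})$ — in particular getting the $\nleq$ direction right and checking that the non-contracted output components are bounded by the appropriate entries. Everything else reduces to the induction hypothesis plus Lemmas~\ref{lem:padded:expression:typing} and~\ref{lem:padded:eval:well-defined}, and since the six syntactic forms of $\e$ are exhausted by the cases above, no further cases remain.
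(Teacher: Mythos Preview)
Your proposal is correct and follows essentially the same approach as the paper's proof: structural induction on $\e$, with the contraction case handled by invoking Lemma~\ref{lem:padded:expression:typing} to identify the padded summation bound and Lemma~\ref{lem:padded:eval:well-defined}(\ref{lem:padded:eval:well-defined:ii}) to kill the overshoot summands. Your treatment is in fact slightly more explicit than the paper's in verifying membership of the overshoot multi-indices in $(\t_0,\up{\t_0})$, which the paper leaves implicit.
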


\begin{theorem}[Simulation]
\label{thm:padded:simulation}
Let $\decl^*\:\stmt^*\:\:\ok$, and let $\Gamma$ be the static context for which $\decl^*\vdash_{s} \Gamma$.
Let $\mu_1$ and $\mu_2$ be the uniquely determined stores from Theorems~\ref{thm:program:evaluation} and~\ref{thm:padded:program:evaluation} respectively, %
i.e.~$\decl^*\:\stmt^* \Downarrow \mu_1$ %
and $\decl^*\:\stmt^* \Downarrow_{p} \mu_2$.
Then, if $\x\in\dom{\Gamma}$, $\t = \Gamma(\x)$, %
\begin{align*}
	\mu_2(\x_{\i}) = \left\{\begin{array}{ll}
						\mu_1(\x_{\i}) \,, & \textrm{if} \:\: \i\le \t \\
						0 \,, & \textrm{if} \:\: \i\in(\t,\up{\t})
					 \end{array}\right.
	.
\end{align*}
\begin{proof}
By induction on the length of $\stmt^*$, which we denote as $N$. \\
~\\[-3mm]
$N=0$: %
In this case, $\mu_1=\mu(\Gamma)$ and $\mu_2=\up{\mu}(\Gamma)$, %
by Lemmata~\ref{lem:dynamic:store} and~\ref{lem:padded:store} respectively.
The desired relationship between $\mu_1$ and $\mu_2$ also follows from Lemma~\ref{lem:padded:store}. \\
~\\[-3mm]
$N+1$: %
Let %
\begin{align*}
	\stmt^{N+1} \equiv \stmt^N \:\x\:\textbf{=}\:\e .
\end{align*}
By the induction hypothesis, %
\begin{align*}
	&\decl^*\:\stmt^N \Downarrow \mu_{N,1}   \, , \\
	&\decl^*\:\stmt^N \Downarrow_p \mu_{N,2} \, , \\
	&\mu_{N,2}(\y_{\i}) = \left\{\begin{array}{ll}
							\mu_{N,1}(\y_{\i}) \,, & \textrm{if} \:\: \i\le \t_{\y} \\
							0 \,, & \textrm{if} \:\: \i\in(\t_{\y},\up{\t_{\y}})
	 					  \end{array}\right.
	,
\end{align*}
for $\y\in\dom{\Gamma}$, $\t_{\y} = \Gamma(\y)$.
Moreover, by inversion of \evseq and \evPADseq %
(using the uniqueness of $\mu_{N,1}$ and $\mu_{N,2}$),
\begin{align*}
	&\langle \mu_{N,1}, \x\:\textbf{=}\:\e\rangle \rightarrow_{\Gamma} \langle \mu_1, \emptyset \rangle, \\
	&\langle \mu_{N,2}, \x\:\textbf{=}\:\e\rangle \leadsto_{\Gamma} \langle \mu_2, \emptyset \rangle.
\end{align*}
Hence, by \evstmt and \evPADstmt respectively,
\begin{align*}
	&\mu_1 = \mu_{N,1}\{
				\forall_{\i \le \t_{\x}} \: \x_{\i} \mapsto \eval{\Gamma\!,\mu_{N,1}}{\e_{\i}}
			 \} , \\
	&\mu_2 = \mu_{N,2}\{
				\forall_{\i \le \up{\t_{\x}}} \: \x_{\i} \mapsto \eval{\up{\Gamma}\!,\mu_{N,2}}{\e_{\i}}
			 \} ,
\end{align*}
where $\t_{\x} = \Gamma(\x)$.
For $\y\not\equiv\x$ the claim of the theorem holds by the induction hypothesis.
For $\y\equiv\x$ and $\i \le \t_{\x}$, we have
\begin{align*}
	\mu_2(\x_{\i}) 
	= \eval{\up{\Gamma}\!,\mu_{N,2}}{\e_{\i}}
	= \eval{\Gamma\!,\mu_{N,1}}{\e_{\i}}
	= \mu_1(\x_{\i}) ,
\end{align*}
by Lemma~\ref{lem:padded:evaluation}.
For $\y\equiv\x$ and $\i \in (\t_{\x}, \up{\t_{\x}})$, we have
\begin{align*}
	\mu_2(\x_{\i}) 
	= \eval{\up{\Gamma}\!,\mu_{N,2}}{\e_{\i}}
	= 0 ,
\end{align*}
by Lemma~\ref{lem:padded:eval:well-defined}(\ref{lem:padded:eval:well-defined:ii}).
\end{proof}	
\end{theorem}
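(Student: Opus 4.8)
The plan is to prove the simulation theorem by induction on the length $N$ of $\stmt^*$, mirroring the structure of the progress theorems (Theorems~\ref{thm:program:evaluation} and~\ref{thm:padded:program:evaluation}), and leaning heavily on Lemma~\ref{lem:padded:evaluation} for the inductive step. The base case $N=0$ is immediate: by the uniqueness parts of Lemma~\ref{lem:dynamic:store} and Lemma~\ref{lem:padded:store}, the two stores that witness $\decl^* \Downarrow \mu_1$ and $\decl^* \Downarrow_p \mu_2$ must be $\mu(\Gamma)$ and $\up{\mu}(\Gamma)$ respectively, and the required relationship between them is exactly part~(ii) of Lemma~\ref{lem:padded:store}.

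For the inductive step, write $\stmt^{N+1} \equiv \stmt^N\,\x\,\textbf{=}\,\e$. First I would invoke the induction hypothesis to get stores $\mu_{N,1}$, $\mu_{N,2}$ with $\decl^*\,\stmt^N \Downarrow \mu_{N,1}$ and $\decl^*\,\stmt^N \Downarrow_p \mu_{N,2}$ satisfying the claimed relation. Then, using the uniqueness of these intermediate stores (from Theorems~\ref{thm:program:evaluation} and~\ref{thm:padded:program:evaluation}) together with inversion of \evseq and \evPADseq, I would extract the single-statement transitions $\langle\mu_{N,1},\x\,\textbf{=}\,\e\rangle \rightarrow_\Gamma \langle\mu_1,\emptyset\rangle$ and $\langle\mu_{N,2},\x\,\textbf{=}\,\e\rangle \leadsto_\Gamma \langle\mu_2,\emptyset\rangle$. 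Inverting \evstmt and \evPADstmt then gives explicit formulas: $\mu_1$ agrees with $\mu_{N,1}$ except that $\x_{\i} \mapsto \eval{\Gamma,\mu_{N,1}}{\e_{\i}}$ for $\i\le\t$, and $\mu_2$ agrees with $\mu_{N,2}$ except that $\x_{\i} \mapsto \eval{\up{\Gamma},\mu_{N,2}}{\e_{\i}}$ for $\i\le\up{\t}$.

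Now I would split into cases on whether the variable $\y\in\dom{\Gamma}$ being queried equals $\x$. If $\y\not\equiv\x$, the relation is untouched by the assignment and follows from the induction hypothesis. If $\y\equiv\x$, there are two sub-cases. For $\i\le\t$, I would chain $\mu_2(\x_{\i}) = \eval{\up{\Gamma},\mu_{N,2}}{\e_{\i}} = \eval{\Gamma,\mu_{N,1}}{\e_{\i}} = \mu_1(\x_{\i})$, where the middle equality is precisely Lemma~\ref{lem:padded:evaluation}. To apply that lemma I must check its hypotheses~a.\ and~b.\ for the pair $(\mu_{N,1},\mu_{N,2})$: hypothesis~a.\ follows since $\dom{\mu_{N,1}} = \dom{\mu(\Gamma)}$ and $\dom{\mu_{N,2}} = \dom{\up{\mu}(\Gamma)}$ by the two progress theorems, and hypothesis~b.\ is exactly the induction hypothesis of the current theorem. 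For $\i\in(\t,\up{\t})$, I would use $\mu_2(\x_{\i}) = \eval{\up{\Gamma},\mu_{N,2}}{\e_{\i}} = 0$ by Lemma~\ref{lem:padded:eval:well-defined}(\ref{lem:padded:eval:well-defined:ii}), which again applies because $\mu_{N,2}$ satisfies requirements~\ref{lem:padded:eval:well-defined:a}.\ and~\ref{lem:padded:eval:well-defined:b}.\ of that lemma.

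**The main obstacle** is bookkeeping rather than mathematical depth: one must be careful that the intermediate stores extracted by inverting the sequencing rules genuinely satisfy the domain conditions and the zero-padding conditions required by Lemmata~\ref{lem:padded:evaluation} and~\ref{lem:padded:eval:well-defined}, and that the uniqueness clauses of the two progress theorems legitimately pin down these intermediate stores so that the induction hypothesis of the simulation theorem can be applied to them. Once the hypotheses of the auxiliary lemmas are verified, the argument is a short computation. All the genuine content — linearity of the operations, the behavior of controlled division, and the fact that padded entries evaluate to zero — has been isolated into Lemma~\ref{lem:padded:eval:well-defined}(\ref{lem:padded:eval:well-defined:ii}) and Lemma~\ref{lem:padded:evaluation}, so this final theorem is essentially an exercise in assembling those pieces.
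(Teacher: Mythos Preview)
Your proposal is correct and follows essentially the same route as the paper's proof: induction on the length of $\stmt^*$, base case via Lemmata~\ref{lem:dynamic:store} and~\ref{lem:padded:store}, inductive step by inverting \evseq/\evPADseq and \evstmt/\evPADstmt, then appealing to Lemma~\ref{lem:padded:evaluation} for $\i\le\t$ and Lemma~\ref{lem:padded:eval:well-defined}(\ref{lem:padded:eval:well-defined:ii}) for $\i\in(\t,\up{\t})$. If anything, you are slightly more explicit than the paper in verifying that the intermediate stores $\mu_{N,1}$, $\mu_{N,2}$ satisfy the hypotheses of the auxiliary lemmas.
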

Note that for a minimal simulation result, it would suffice that the dynamic stores $\mu_1$ and $\mu_2$ from Theorem~\ref{thm:padded:simulation} agree on the intersection of their domains.
Theorem~\ref{thm:padded:simulation} is slightly stronger in that it also states that the store $\mu_2$ takes the value zero outside of the domain of $\mu_1$.
This additional strength was used in the above proof to yield a strong enough induction hypothesis.
\section{Extensions}
\label{sec:extensions}
As described in Sections~\ref{sec:evaluation} and~\ref{sec:limitations}, a model language program interfaces with the ambient application through the variables declared in the model language program.
Memory for variables is assumed to have been allocated by the ambient application.
This interface is rather crude in that it does not allow one to express which variables are intended to be used for communication with the ambient application or how variables are intended to be used (for input, output, or both).
An easy way to improve this interface is to add qualifiers to variable declarations that indicate how a variable is to be used.
This is analogous to the storage qualifiers \emph{in} and \emph{out} in the OpenGL shading language~\cite{OpenGL:2017}.
The Fortran programming language~\cite{Fortran:2010} allows a similar qualification of procedure arguments with the \emph{intent} attribute.

An \emph{input} qualifier on the declaration of a variable \x in the model language could be used to indicate that \x is intended to communicate data from the ambient application into the model language program.
This would imply that none of the elements of the tensor \x can be undefined when execution of the model language program begins.
Thus, the rules for forming the dynamic store in Figure~\ref{fig:dynamic:store} would have to be modified.
Figure~\ref{fig:dynamic:store:input} gives the modified inference rules.
Note that rule \dinputvar must be applied when a variable declaration with an \emph{input} qualifier is met, %
and this rule disallows the undefined value $\bullet$.
\begin{figure*}[h!]
	\begin{minipage}{\textwidth}
		\begin{gather*}
		\infer[\dempty]
		{\emptyset \vdash_{d} \bot}
		{} \\[2mm]
		\infer[\dvar]
		{\decl^*\:\:
			\textbf{var}\: \x\: \textbf{:}\: \textbf{[}d_1, \dots, d_k\textbf{]}
			\vdash_{d} \mu\{\forall_{\i\le\t}\:\x_{\i} \mapsto x_{\i}\}}
		{\decl^* \vdash_{d} \mu \quad
			\t = (d_1, \dots, d_k) \quad
			\forall_{\i\le\t}\: x_{\i}\in\VV\cup\{\bullet\}} \\[2mm]
		\infer[\dinputvar]
		{\decl^*\:\:
			\textbf{var input}\: \x\: \textbf{:}\: \textbf{[}d_1, \dots, d_k\textbf{]}
			\vdash_{d} \mu\{\forall_{\i\le\t}\:\x_{\i} \mapsto x_{\i}\}}
		{\decl^* \vdash_{d} \mu \quad
			\t = (d_1, \dots, d_k) \quad
			\forall_{\i\le\t}\: x_{\i}\in\VV}
		\end{gather*}
		\vspace{-6mm}
		\captionof{figure}{Dynamic store formation with input qualifier.}
		\label{fig:dynamic:store:input}
	\end{minipage}%
\end{figure*}

Extending the model language with an \emph{input} qualifier also enables a static program analysis that detects uses of uninitialized variables.
By Figure~\ref{fig:dynamic:store:input}, only variables declared as \emph{input} are guaranteed to be initialized when program execution begins.
Thus, if uses of uninitialized variables are to be avoided, all expressions evaluated by a model language program can ultimately only depend on \emph{input} variables.
Due to the absence of explicit control flow in the model language, analyzing the dependencies between expressions is straightforward.

Similarly, an \emph{output} qualifier can be used to enable further program analysis, and even optimization.
The \emph{output} qualifier indicates which variables are \emph{live-out} of a model language program, i.e.~which variables are used to communicate results of the program back to the ambient application, after the model language program has completed.
Any expressions in a model language program that do not affect the value of an \emph{output} variable can be eliminated from the program.
This is a form of an optimization known as \emph{dead code elimination}, cf.~\cite{Muchnick:2000}.
Note that the language implementation in~\cite{Rink:2018:RWDSL} supports \emph{input} and \emph{output} qualifiers.

Generally, the simple structure of the model language should simplify many program analyses and optimizations.
For example, detecting common subexpressions is straightforward, facilitating \emph{common subexpression elimination}.
The absence of explicit control flow turns \emph{reaching definition analysis} into a simple backwards search: %
each use of a variable \x is reached by the last definition of \x, i.e.~the last assignment to \x.
(See again~\cite{Muchnick:2000} for descriptions of the mentioned analyses and optimizations.)

Lastly, it should be noted that many tensor operations are inherently parallel since they independently operate on tensor elements corresponding to different multi-indices.
An advanced implementation of the model language could exploit this parallelism, especially when a GPU architecture is targeted for program execution.
However, a few subtleties must be considered when designing such a parallel implementation.
The rule \evstmt is written such that evaluation of the function \FNeval for different multi-indices \i can proceed in parallel, and the result of \FNeval is assigned to the temporary tensor variable $r$.
Only then is the store $\mu$ updated with the values of the elements of $r$.
A direct implementation of this is rather wasteful since additional memory must be allocated to hold the temporary tensor $r$.
This is likely to lead to less than optimal performance of the memory system, especially caches.
A more sensible implementation would avoid allocating memory for the temporary $r$ and directly commit the result of \FNeval to the store $\mu$.
However, proceeding like this in parallel, i.e.~for different multi-indices \i at the same time, may lead to incorrect results if the variable \x that is assigned to also occurs in the expression \e from the rule \evstmt.
Thus, non-trivial program analysis is required to exploit parallelism correctly and efficiently in evaluating tensor expressions.
\section{Summary}
\label{sec:summary}
In the present report, we have introduced a language for manipulating tensor expressions.
The language serves to model any DSL or tensor library API that supports common tensor operations, e.g.~element-wise arithmetic, contraction, transposition.
Formal semantics of our model language have been specified, and a number of properties have been established formally.

The key properties of our model language definition can be summarized as \emph{progress} and \emph{safety}.
By progress we mean that a well-formed program in the model language can be fully evaluated, without getting stuck.
The property of safety, in the sense that there are no out-of-bounds memory accesses during the evaluation of a model language program, was established almost as a by-product of our progress proof.

We have also formally specified an alternative implementation of the model language in which tensor dimensions are padded up to multiples of the targeted machine's vector length or SIMD width.
In a practical implementation, padding may have a positive impact on performance.
The benefit of having defined formal language semantics is that the original language definition and the implementation with padding can formally be proven equivalent.

A number of extensions of our model language have been discussed, leading to discussions also of (a) the applicability of standard program analyses and optimizations to model language programs and (b) the possibility of exploiting parallelism in tensor operations in a language implementation.
Hence, future work could look into (a) formalizing program analyses and optimizations in the context of the model language and (b) formalizing parallel evaluation semantics for model language programs.
\section*{Acknowledgments}
The author would like to thank %
Sebastian Ertel and %
Andres Goens
for general discussions on formal language semantics.
This work was partially supported by the German Research Council (DFG) through the Cluster of Excellence `Center for Advancing Electronics Dresden' (cfaed).
\newpage
\bibliography{bibliography}

\begin{thebibliography}{10}

\bibitem{Fortran:2010}
{Fortran 2008 (ISO/IEC 1539-1:2010)}.
\newblock \url{http://www.j3-fortran.org/doc/year/10/10-007.pdf}, 2010.

\bibitem{itensor:2017}
{ITensor -- Intelligent Tensor Library}.
\newblock \url{http://itensor.org/}, 2017.

\bibitem{numpy:2017}
{NumPy, package for scientific computing with Python}.
\newblock \url{http://www.numpy.org/}, 2017.

\bibitem{OpenGL:2017}
{The OpenGL Shading Language}.
\newblock
  \url{https://www.khronos.org/registry/OpenGL/specs/gl/GLSLangSpec.4.60.pdf},
  2017.

\bibitem{Abadi:2016:TensorFlow}
{\sc Abadi, M., Barham, P., Chen, J., Chen, Z., Davis, A., Dean, J., Devin, M.,
  Ghemawat, S., Irving, G., Isard, M., Kudlur, M., Levenberg, J., Monga, R.,
  Moore, S., Murray, D.~G., Steiner, B., Tucker, P., Vasudevan, V., Warden, P.,
  Wicke, M., Yu, Y., and Zheng, X.}
\newblock Tensor{F}low: A system for large-scale machine learning.
\newblock In {\em Proceedings of the 12th USENIX Symposium on Operating Systems
  Design and Implementation\/} (2016), OSDI '16, pp.~265--283.

\bibitem{Aho:2003:dragonbook}
{\sc Aho, A.~V., Lam, M.~S., Sethi, R., and Ullman, J.~D.}
\newblock {\em {Compilers: Principles, Techniques, and Tools}}.
\newblock Pearson, 2014.

\bibitem{Baumgartner:2005:TCE}
{\sc Baumgartner, G., Auer, A., Bernholdt, D.~E., Bibireata, A., Choppella, V.,
  Cociorva, D., Gao, X., Harrison, R.~J., Hirata, S., Krishnamoorthy, S.,
  Krishnan, S., chung Lam, C., Lu, Q., Nooijen, M., Pitzer, R.~M., Ramanujam,
  J., Sadayappan, P., and Sibiryakov, A.}
\newblock Synthesis of high-performance parallel programs for a class of ab
  initio quantum chemistry models.
\newblock {\em Proceedings of the IEEE 93}, 2 (2005), 276--292.

\bibitem{Bergstra:2010:Theano}
{\sc Bergstra, J., Breuleux, O., Bastien, F., Lamblin, P., Pascanu, R.,
  Desjardins, G., Turian, J., Warde-Farley, D., and Bengio, Y.}
\newblock Theano: A {CPU} and {GPU} math expression compiler.
\newblock In {\em Proceedings of the Python for Scientific Computing Conference
  ({SciPy})\/} (2010).

\bibitem{Chen:2017:typesafe:abstractions}
{\sc Chen, T.}
\newblock Typesafe abstractions for tensor operations (short paper).
\newblock In {\em Proceedings of the 8th ACM SIGPLAN International Symposium on
  Scala\/} (2017), SCALA '17, pp.~45--50.

\bibitem{DeVito:2011:Liszt}
{\sc DeVito, Z., Joubert, N., Palacios, F., Oakley, S., Medina, M., Barrientos,
  M., Elsen, E., Ham, F., Aiken, A., Duraisamy, K., Darve, E., Alonso, J., and
  Hanrahan, P.}
\newblock Liszt: A domain specific language for building portable mesh-based
  {PDE} solvers.
\newblock In {\em Proceedings of the 2011 International Conference for High
  Performance Computing, Networking, Storage and Analysis\/} (2011), SC '11,
  pp.~9:1--9:12.

\bibitem{Eaton:2006:Haskell}
{\sc Eaton, F.}
\newblock Statically typed linear algebra in {H}askell.
\newblock In {\em Proceedings of the 2006 ACM SIGPLAN Workshop on Haskell\/}
  (2006), Haskell '06, pp.~120--121.

\bibitem{Froberg:1965}
{\sc Fr\"{o}berg, C.-E.}
\newblock {\em {Introduction to Numerical Analysis}}.
\newblock Addison-Wesley, 1965.

\bibitem{Kirby:2006}
{\sc Kirby, R.~C., and Logg, A.}
\newblock A compiler for variational forms.
\newblock {\em ACM Trans. Math. Softw. 32}, 3 (2006), 417--444.

\bibitem{Kjolstad:2017:TACO}
{\sc Kjolstad, F., Kamil, S., Chou, S., Lugato, D., and Amarasinghe, S.}
\newblock The tensor algebra compiler.
\newblock {\em Proc. ACM Program. Lang. 1}, OOPSLA (2017), 77:1--77:29.

\bibitem{Luporini:2015:COFFEE}
{\sc Luporini, F., Varbanescu, A.~L., Rathgeber, F., Bercea, G.-T., Ramanujam,
  J., Ham, D.~A., and Kelly, P. H.~J.}
\newblock Cross-loop optimization of arithmetic intensity for finite element
  local assembly.
\newblock {\em ACM Trans. Archit. Code Optim. 11}, 4 (2015), 57:1--57:25.

\bibitem{Lynch:1964:2}
{\sc Lynch, R.~E., Rice, J.~R., and Thomas, D.~H.}
\newblock Direct solution of partial difference equations by tensor product
  methods.
\newblock {\em Numerische Mathematik 6}, 1 (1964), 185--199.

\bibitem{Lynch:1964:1}
{\sc Lynch, R.~E., Rice, J.~R., and Thomas, D.~H.}
\newblock Tensor product analysis of partial differential equations.
\newblock {\em Bull. American Math. Soc.\/} (1964), 378--384.

\bibitem{Membarth:2016:Hipacc}
{\sc Membarth, R., Reiche, O., Hannig, F., Teich, J., K\"{o}rner, M., and
  Eckert, W.}
\newblock {HIPAcc}: A domain-specific language and compiler for image
  processing.
\newblock {\em IEEE Transactions on Parallel and Distributed Systems 27}, 1
  (2016), 210--224.

\bibitem{Milner:1978}
{\sc Milner, R.}
\newblock A theory of type polymorphism in programming.
\newblock {\em Journal of Computer and System Sciences 17}, 3 (1978), 348 --
  375.

\bibitem{Muchnick:2000}
{\sc Muchnick, S.~S.}
\newblock {\em {Advanced compiler design and implementation}}.
\newblock Morgan Kaufmann, 1997.

\bibitem{Perard-Gayot:2017:RaTrace}
{\sc P{\'e}rard-Gayot, A., Weier, M., Membarth, R., Slusallek, P., Lei{\ss}a,
  R., and Hack, S.}
\newblock {RaTrace}: Simple and efficient abstractions for bvh ray traversal
  algorithms.
\newblock In {\em Proceedings of the 16th ACM SIGPLAN International Conference
  on Generative Programming: Concepts and Experiences\/} (2017), GPCE '17,
  pp.~157--168.

\bibitem{Peyton-Jones:1987}
{\sc Peyton~Jones, S.~L.}
\newblock {\em {The Implementation of Functional Programming Languages}}.
\newblock Prentice Hall, 1987.

\bibitem{Puschel:2004:SPIRAL}
{\sc P\"{u}schel, M., Moura, J. M.~F., Singer, B., Xiong, J., Johnson, J.,
  Padua, D., Veloso, M., and Johnson, R.~W.}
\newblock Spiral: A generator for platform-adapted libraries of signal
  processing algorithms.
\newblock {\em Int. J. High Perform. Comput. Appl. 18}, 1 (2004), 21--45.

\bibitem{Ragan-Kelley:2013:Halide}
{\sc Ragan-Kelley, J., Barnes, C., Adams, A., Paris, S., Durand, F., and
  Amarasinghe, S.}
\newblock Halide: A language and compiler for optimizing parallelism, locality,
  and recomputation in image processing pipelines.
\newblock In {\em Proceedings of the 34th ACM SIGPLAN Conference on Programming
  Language Design and Implementation\/} (2013), PLDI '13, pp.~519--530.

\bibitem{Rink:2018:RWDSL}
{\sc Rink, N. A. e.~a.}
\newblock {CFD}lang: High-level code generation for high-order methods in fluid
  dynamics (in preparation).

\bibitem{Spampinato:2016:LGen}
{\sc Spampinato, D.~G., and P\"{u}schel, M.}
\newblock A basic linear algebra compiler.
\newblock In {\em Proceedings of the Annual IEEE/ACM International Symposium on
  Code Generation and Optimization\/} (2014), CGO '14, pp.~23:23--23:32.

\bibitem{Springer:2016}
{\sc Springer, P., and Bientinesi, P.}
\newblock Design of a high-performance {GEMM}-like tensor-tensor
  multiplication.
\newblock {\em arXiv:1607.00145 [cs.MS]\/} (2016).

\bibitem{Springer:2017:TTC}
{\sc Springer, P., Hammond, J.~R., and Bientinesi, P.}
\newblock {TTC}: A high-performance compiler for tensor transpositions.
\newblock {\em ACM Trans. Math. Softw. 44}, 2 (2017), 15:1--15:21.

\bibitem{Steuwer:2015:rewriting}
{\sc Steuwer, M., Fensch, C., Lindley, S., and Dubach, C.}
\newblock Generating performance portable code using rewrite rules: From
  high-level functional expressions to high-performance {OpenCL} code.
\newblock In {\em Proceedings of the 20th ACM SIGPLAN International Conference
  on Functional Programming\/} (2015), ICFP '15, pp.~205--217.

\bibitem{Steuwer:2016:beyond:auto-tuning}
{\sc Steuwer, M., Remmelg, T., and Dubach, C.}
\newblock Matrix multiplication beyond auto-tuning: Rewrite-based {GPU} code
  generation.
\newblock In {\em Proceedings of the International Conference on Compilers,
  Architectures and Synthesis for Embedded Systems\/} (2016), CASES '16,
  pp.~15:1--15:10.

\bibitem{Steuwer:2017:Lift}
{\sc Steuwer, M., Remmelg, T., and Dubach, C.}
\newblock Lift: A functional data-parallel {IR} for high-performance {GPU} code
  generation.
\newblock In {\em Proceedings of the 2017 International Symposium on Code
  Generation and Optimization\/} (2017), CGO '17, pp.~74--85.

\bibitem{Stoudenmire:2016}
{\sc Stoudenmire, E., and Schwab, D.}
\newblock Supervised learning with tensor networks.
\newblock {\em Advances in Neural Information Processing Systems\/} (2016),
  4806--4814.

\bibitem{Strang:2010}
{\sc Strang, G.}
\newblock {\em {Wissenschaftliches Rechnen}}.
\newblock Springer, 2010.

\bibitem{Susungi:2017:GPCE}
{\sc Susungi, A., Rink, N.~A., Castrill\'{o}n, J., Huismann, I., Cohen, A.,
  Tadonki, C., Stiller, J., and Fr\"{o}hlich, J.}
\newblock Towards compositional and generative tensor optimizations.
\newblock In {\em Proceedings of the 16th ACM SIGPLAN International Conference
  on Generative Programming: Concepts and Experiences\/} (2017), GPCE '17,
  pp.~169--175.

\bibitem{Wei:2017:DLVM}
{\sc Wei, R., Schwartz, L., and Adve, V.}
\newblock {DLVM}: A modern compiler framework for neural network {DSL}s.
\newblock In {\em 31st Conference on Neural Information Processing Systems\/}
  (2017), NIPS '17.

\end{thebibliography}
\end{document}